\pgfplotsset{compat=1.3}
\newtheorem{theorem}{Theorem}[section]
\newtheorem{proposition}[theorem]{Proposition}
\newtheorem{lemma}[theorem]{Lemma}
\newtheorem{corollary}[theorem]{Corollary}
\newtheorem{conjecture}[theorem]{Conjecture}
\theoremstyle{definition}
\newtheorem{definition}[theorem]{Definition}
\newtheorem{remark}[theorem]{Remark}
\newtheorem{example}[theorem]{Example}
\newcommand{\Pc}{\mathcal{P}}
\newcommand{\CHI}{\hbox{\raise .4ex \hbox{$\chi$}}}
\newcommand{\Fc}{{\mathcal{F}}}
\newcommand{\R}{{\mathbb{R}}}
\newcommand{\Sb}{\mathbb{S}}
\newcommand{\Sd}{\mathbb{S}^{d-1}}
\newcommand{\nm}[1]{\|{#1}\|}
\newcommand{\ip}[2]{\langle#1,#2\rangle}
\newcommand{\fp}{\text{FP}}
\newcommand{\pfp}{\text{PFP}}
\definecolor{darkgreen}{RGB}{60,115,60}
\newcommand{\qt}[1]{\quad \text{#1} \quad}
\DeclareMathOperator{\Span}{span}
\begin{document}

\title{Universal optimal configurations for the  $p$-frame potentials}

\author{X.~Chen}
\address{Department of Mathematical Sciences\\
New Mexico State University\\
Las Cruces, NM 88003}
\email{xchen@nmsu.edu}

\author{V.~Gonzales}
\address{Department of Mathematics\\
University of Maryland\\
College Park\\
MD 20742}
\email{victor.gonzalez578@yahoo.com}

\author{E.~Goodman}
\address{Department of Mathematics\\
University of Pennsylvania\\
David Rittenhouse Lab\\
209 South 33rd Street\\
Philadelphia, PA 19104}
\email{ericgood@sas.upenn.edu}

\author{S.~Kang}
\address{Department of Mathematics and Norbert Wiener Center\\
University of Maryland\\
College Park\\
MD 20742}
\email{kangsj@math.umd.edu}

\author{K.~A.~Okoudjou}
\address{Department of Mathematics and Norbert Wiener Center\\
University of Maryland\\
College Park\\
MD 20742}
\email{kasso@math.umd.edu}

\maketitle

\begin{abstract}
Given $d, N\geq 2$ and $p\in (0, \infty]$ we consider a family of functionals, the $p$-frame potentials FP$_{p, N, d}$, defined  on the set of all collections of $N$ unit-norm vectors in $\mathbb R^d$.  For the special case $p=2$ and $p=\infty$, both the minima and the minimizers of these potentials  have been thoroughly investigated.  In this paper, we investigate the minimizers of the functionals FP$_{p, N, d}$, by first establishing  some general properties of their minima. Thereafter, we focus on the special case $d=2$, for which, surprisingly, not much is known. One of our main results establishes the unique minimizer for big enough $p$. Moreover, this minimizer is universal in the sense that it minimizes a large range of energy functions that includes the $p$-frame potential. We conclude the paper by reporting some numerical experiments for the case  $d\geq 3$, $N=d+1$,  $p\in (0, 2)$. These experiments lead to some conjectures that we pose. 
\end{abstract}

\section{Introduction} 
 A set of vectors $X=\{x_k\}_{k=1}^{N}
\subseteq \mathbb{\R}^d$ is a \textit{frame} for $\mathbb{R}^d$ if there exist $0<A\leq B < \infty$ such that 
 \begin{equation}\label{eq:frineq}
   \quad A\nm{x}{}^{2} \leq \sum_{k=1}^{N} | \ip{x}{x_k}|^{2}
   \leq B \nm{x}{}^{2} \quad\text{for all }x \in \mathbb{R}^d,
\end{equation}
where $\|\cdot\|$ denotes the Euclidean norm.
If, in addition, each $x_k$ is unit-norm, we
say that $X$ is a {\it unit-norm frame}. $X$ is called \emph{tight} if $A=B$.
A tight unit-norm frame is called a {\it finite unit-norm tight frame (FUNTF)}.  One attractive feature  of FUNTFs   is the fact that they can be used to decompose and reconstruct any vector $x$ via the following formula:
\begin{equation}\label{funtfexpan}
x =\frac{d}{N} \sum_{k=1}^{N}\ip{x}{x_{k}}x_{k}.
\end{equation} 
Frames in general,  and FUNTFs in particular, are routinely used in many applications, especially in signal processing. For more on the theory and the applications of frames we refer to   \cite{CasazzaFramesChapter, KovChe1, KovChe2, OkoudjouFiniteFrame}.  

A  frame $X$ is said to be {\it equiangular} if there exists $c>0$ such that 
$$\Big|\Big\langle  \frac{x_k}{\|x_k\|}, \frac{x_l}{\|x_l\|}\Big\rangle \Big|=c\quad \text{ for all }k\neq l.$$ 
If in addition $X$ is  tight, then $X$ is called an \emph{equiangular tight frame (ETF)}. It follows from \cite[Proposition 1.2]{BodPauTom2009} that  the vectors   of an ETF have  necessarily  equal norm. Consequently,  and  without loss of generality, all ETFs considered in the sequel  will be unit-norm frames, i.e., FUNTFs.

Let $S(N,d)$ be the collection of all sets of   $N$ unit-norm vectors.  For any $p\in (0, \infty]$, the \emph{$p$-frame potential} of $X=\{x_k\}_{k=1}^{N}\in S(N,d)$ is defined as 
\begin{equation}\label{eq:pfrpot}
\fp_{p,N,d}(X):=\left\{\begin{array}{ll}\displaystyle\sum_{k=1}^{N}\sum_{l\neq k}^N|\ip{x_k}{x_{\ell}}|^p,\, &\textrm{when}\, p<\infty\\
\displaystyle\max_{ k\neq \ell} |\ip{x_{k}}{x_{\ell}}|,&\text{when }p=\infty.\\
\end{array}\right.
\end{equation} 
The definition of the $p$-frame potential above differs from the  one given in  \cite{EhlOko2012} as \eqref{eq:pfrpot} excludes self inner products. As will be seen in Section~\ref{sec:basic}, the present definition will allow us to state our results in a more concise manner.  The subscripts $N, d$ are a little redundant since they are suggested by the input $X$, but they will come handy when we want to emphasize the dimension or the number of points. We are interested in finding the infimum of the $p$-frame potential among all $N$-point configurations in $S(N,d)$. It is a standard argument to show that this infimum can be achieved due to the compactness of the sphere and the continuity of the function, so we can replace infimum by minimum and define
\begin{equation}\label{equ:min}
\Fc_{p,N,d}:=\min_{X\in S(N,d)}\fp_{p,N,d}(X).
\end{equation}

 In situations when $N,d$ are both fixed, we will simply use $\Fc_p$ for $\Fc_{p,N,d}$, and $\fp_p$ for $\fp_{p,N,d}$. Similarly we use the notations $\Fc_N,\fp_N$ if  $p$ and $d$ are fixed.
Any minimizer of \eqref{equ:min} will be called an \emph{optimal configuration} of the $p$-frame potential. We observe that if $X^*=\{x_1^*,\cdots,x_N^*\}$ is optimal, then with any orthogonal matrix $U$, any permutation $\pi$, and any $s_i\in\{1,-1\}$,
$$\{s_1 Ux_{\pi_1}^*,\cdots, s_N Ux_{\pi_N}^*\}$$ is optimal too. In other words, the optimal configuration is an equivalence class with respect to orthogonal transformations, permutations or sign switches.
So when we say an optimal configuration is unique, we mean that it is  unique up to this equivalence relation.

Note that in the definition of the frame potential, $X$ does not necessarily need to be a frame of $\R^d$, but we will show in Proposition \ref{prop:general} that the minimizers of the $p$-frame potential must be a frame, as expected. Therefore problem \eqref{equ:min} remains the same if we had restricted $X$ to be a unit-norm frame with $N$ frame vectors.

The name ``frame potential'' originates from the special case $p=2$, 
\begin{equation}\label{eq:frpot}
\fp_{2,N,d}(X) = \sum_{k=1}^{N} \sum_{l\neq k}^{N} |\ip{x_{k}}{x_{l}}|^{2}
\end{equation}
which was studied by Benedetto and Fickus~\cite{BenFic2003}.
 They  proved that $X^*$ is an optimal configuration of $\fp_{2,N,d}(X)$ if and only if $X^*=\{x_k^*\}_{k=1}^N$ is a FUNTF.

Another important special case is $p=\infty$. In this case, the quantity
\begin{equation}
c(X):=\fp_{\infty, N, d}(X)=\max_{ k\neq \ell} |\ip{x_{k}}{x_{\ell}}|
\end{equation}
is also called the \emph{coherence} of $X=\{x_k\}_{k=1}^N\in S(N,d)$, and its minimizers are called \emph{Grassmanian frames} \cite{BenKeb2008, Bukcox18, HeaStr2003, welch74}.  The following Welch bound~\cite{welch74} is well known:
\begin{equation}\label{welch}
\fp_{\infty, N, d}(X)\geq \sqrt{\frac{N-d}{d(N-1)}},
\end{equation}
and the equality  in~\eqref{welch} holds if and only if $X=\{x_k\}_{k=1}^N$ is an ETF,
which is only possible if $\displaystyle N\leq \frac{d(d+1)}{2}$. 
The coherence minimization problem corresponds to   $p=\infty$ because it appears to be the limiting case when $p$ grows to infinity; see Proposition \ref{prop:infty}.

 When  $p$ is an even integer, the minimizers of $\fp_{p, N, d}$ have long been investigated in the setting of spherical designs, see \cite{EhlOko2012, venkov2001reseaux}. 
 A set of $N$ points $X\subset \Sd$ (the unit sphere in $\R^d$) is called a spherical $t-$design if for every homogeneous polynomial $h$ of degree $t$ or less,
$$\int_{\Sd} h(\xi)d\sigma(\xi)=\frac{1}{N}\sum_{x\in X}h(x),$$
where $\sigma$ is the normalized surface measure on $\Sd$. For example, a spherical $1$-design is a set of points whose  center of mass is at the origin. More generally, as shown in \cite{EhlOko2012} or \cite[Theorem 8.1]{venkov2001reseaux}, if $p$ is an even integer and $X\in S(N,d)$ is symmetric, that is $X=-X$, then  
\begin{equation}\label{equ:sph}\fp_{p,N,d}(X)\geq N^2\frac{1\cdot3\cdot5\cdots (p-1)}{d(d+2)\cdots(d+p-2)}-N,
\end{equation}
and equality holds if and only if $X$ is a spherical $p$-design.

Optimal configurations of \eqref{equ:min} are often not symmetric since $x_i$ and $-x_i$ are considered the same points as far as frame potential is concerned. However, we can still use \eqref{equ:sph} by symmetrizing a frame. Given $X=\{x_i\}_{i=1}^N$ such that its coherence $c(X)<1$ (i.e. no repeated vectors or opposite vectors),  we let   
$$X^{sym}:=\{x_i\}_{i=1}^N\cup\{-x_i\}_{i=1}^N\in S(2N,d).$$
Some straightforward computations result in 
\begin{equation}\label{equ:sym}
\fp_{p, 2N}(X^{sym})=4\fp_{p,N}(X)+2N
\end{equation} 
which combined with \eqref{equ:sph}, can be used to prove

\begin{proposition}\label{thm:spherical}
Let $p$ be an even integer, then
$$\fp_{p,N,d}(X)=\frac{1}{4}(\fp_{p, 2N}(X^{sym})-2N)\geq N^2\frac{1\cdot3\cdot5\cdots (p-1)}{d(d+2)\cdots(d+p-2)}-N,$$
and equality holds if and only if $X^{sym}$ is a spherical $p$-design.
\end{proposition}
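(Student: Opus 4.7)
The entire statement is a short algebraic consequence of two ingredients already on the table: the symmetrization identity \eqref{equ:sym} and the spherical-design lower bound \eqref{equ:sph}. My plan is to chain these together, handling the technical hypothesis on the coherence as the only mild subtlety.

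First I would simply rearrange \eqref{equ:sym} to obtain the claimed equality
$$\fp_{p,N,d}(X)=\tfrac{1}{4}\bigl(\fp_{p,2N,d}(X^{sym})-2N\bigr),$$
which is valid whenever $X^{sym}$ is well-defined. Next, since $X^{sym}=-X^{sym}$ by construction and $X^{sym}\in S(2N,d)$, the symmetry hypothesis of \eqref{equ:sph} is met. Applied to $X^{sym}$ with $2N$ in place of $N$, the bound \eqref{equ:sph} reads
$$\fp_{p,2N,d}(X^{sym})\ \geq\ (2N)^2\,\frac{1\cdot 3\cdot 5\cdots(p-1)}{d(d+2)\cdots(d+p-2)}-2N\ =\ 4N^2\,\frac{1\cdot 3\cdots(p-1)}{d(d+2)\cdots(d+p-2)}-2N.$$
Plugging this into the equality above, subtracting $2N$ and dividing by $4$ gives exactly the desired lower bound. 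Equality transfers verbatim: it holds here iff equality holds in \eqref{equ:sph} for $X^{sym}$, i.e., iff $X^{sym}$ is a spherical $p$-design.

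The one wrinkle is that the identity \eqref{equ:sym} was introduced under the standing assumption $c(X)<1$, so that $X^{sym}$ has no repeated vector. The cleanest fix is to treat $X^{sym}$ as a multiset of $2N$ unit vectors: the double sum defining $\fp_{p,2N,d}$ and the Venkov bound \eqref{equ:sph} both go through unchanged for symmetric multisets, and the arithmetic in the two displays above is identical. Alternatively, one may first assume $c(X)<1$ (where the argument is literal) and then recover the borderline case $c(X)=1$ by perturbing $X$ by an arbitrarily small rotation of one coincident vector and invoking continuity of $\fp_{p,N,d}$. Either way, no new obstruction arises; the proposition is essentially a two-line consequence of \eqref{equ:sym} together with \eqref{equ:sph}.
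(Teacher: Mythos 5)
Your proposal is correct and follows exactly the route the paper intends: rearrange the symmetrization identity \eqref{equ:sym} and apply the Venkov bound \eqref{equ:sph} to the symmetric configuration $X^{sym}\in S(2N,d)$, with equality transferring to the spherical $p$-design condition. Your remark on the standing hypothesis $c(X)<1$ is a reasonable extra precaution, but the paper already imposes that condition when defining $X^{sym}$, so no further argument is needed.
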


Not only is Proposition \ref{thm:spherical} limited to even $p$'s, but it is also not trivial to find spherical $t$-designs for large $t$.
More generally, and to the best of our knowledge, little is known about the complete solutions to \eqref{equ:min} even in the simplest case  $d=2$. When $N=3$,  a solution is given in \cite{EhlOko2012} for all positive $p$.  See also \cite{BilMat18, Oktay} for related results.
For any $N$ and $p=\infty$, it is shown in ~\cite{BK06} that the Grassmannian frame is 
\begin{equation}\label{equ:half}
X_N^{(h)}=\{e^{i\cdot 0}, e^{i\frac{\pi}{N}},e^{i\frac{2\pi}{N}},\cdots, e^{i\frac{(N-1)\pi}{N}}\},
\end{equation} 
which can be viewed as $N$  equally spaced points on the half circle.
The main result of this paper establishes that the unique optimal configuration when $d=2$,  $N\geq4$, and   $p>4\lfloor \frac{N}{2}\rfloor-2$ is $X_N^{(h)}$, where $\lfloor c \rfloor$ is the largest integer that does not exceed $c$. Moreover for $N=4$, our result is sharper as we prove this is the case for $p>2$. 
Such a result is expected since optimal configurations for large $p$ are approaching the Grassmannian frame. Moreover, we are able to show that $X_N^{(h)}$ is the optimal configuration for a big class of kernel functions. See  Theorem \ref{thm:both}. The phenomenon that a given configuration is the optimal configuration for a large range of functions is what we call universal. Such a name stems from the work \cite{CohKum2007}. In addition to these results, we present numerical results for all other values of $p$ and $N$ when $d=2$. Finally, we also consider the special case of $N=d+1$ and $d\geq 3$ and state a conjecture regarding the function $\Fc_p$ for $p\in (0,2]$.
Based on the results of the present paper, Table \ref{tab:results} gives the state of affairs concerning  the solutions of~\eqref{equ:min} and is an invitation to initiate a broader discussion on the problem.

The rest of the paper is organized as follows. Section \ref{sec:basic} states some basic results of the $p$-frame potential including some asymptotic results as $N\rightarrow\infty$. Section \ref{sec3} presents the results for $d=2$. Section \ref{sec:d+1} presents conjectures and numerical results for the case  $N=d+1$. Throughout the paper, we will use $[m:n]$ for the index set $\{m, m+1, \cdots, n\}$.

\renewcommand*{\arraystretch}{1.5}
\begin{table}[ht]
\caption{Optimal configurations for the $p$-frame potential}\label{tab:results}
\begin{tabular}{l|l|l}
            & $\R^2$ & $\R^d$ \\
            \hline
 $p\in(0, \frac{\ln3}{\ln2})$ &$N=3$: ONB+ \cite{EhlOko2012} & $N=d+1$: ONB+ \cite{Glaz19}\\
 \hline
 $p\in(\frac{\ln3}{\ln2},2)$ &$N=3$: ETF \cite{EhlOko2012} & $N=d+1$: see Conjecture \ref{conj} \\\hline
 $p\in(0,2)$ & $N=2k$: $k$ copies of ONB \cite{EhlOko2012}    &$N=kd$: $k$ copies of ONB \cite{EhlOko2012}\\
 \hline
 $p=2$ & \multicolumn{2}{c}{FUNTF \cite{BenFic2003}}\\
 \hline
 $p\in(4\lfloor \frac{N}{2}\rfloor-2, \infty)$ & $N\geq5: X_N^{(h)}$ (Theorem \ref{thm:p})&\multirow{3}{*}{ETF if exists \cite{EhlOko2012, HeaStr2003}} \\
 \cline{1-2}
 $p=\infty$ &Any $N$: $X_N^{(h)}$ \cite{BK06}  \\
 \cline{1-2}
 $p\in(2,\infty)$ & $N=4$: $X_4^{(h)}$ (Theorem \ref{thm:p}) \\
 \hline
\end{tabular}

\vspace{0.1in}

ONB+  refers to an orthonormal basis with a repeated vector. See Definition \ref{def:onbp}(a).

\end{table}

\renewcommand*{\arraystretch}{1}

\section{Some basic results}\label{sec:basic}

Intuitively, minimizing the frame potential amounts  to promoting big angles among vectors. Consequently,  it is expected that the optimal configurations will be at least a frame whose vectors are  reasonably spread out in the sphere. If $X$ is not a frame, then one can always find a vector $e$ that is orthogonal to $X$, and replacing any vector in $X$ by $e$  won't increase the frame potential. In other words, it is trivial to show that problem \eqref{equ:min} might as well be restricted to frames. The following result shows something stronger, that is, it excludes the possibility that a minimizer doesn't span $\R^d$. 

\begin{proposition}\label{prop:general}
For $p\in(0,\infty]$, any optimal configuration of \eqref{equ:min} is a frame of $\R^d$.
\end{proposition}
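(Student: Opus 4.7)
The plan is to argue by contradiction via a single-vector swap. Suppose $X = \{x_1, \dots, x_N\}$ is an optimal configuration that fails to span $\R^d$, and set $V := \operatorname{span}(X)$. Since $V$ is a proper subspace, one can pick a unit vector $e \in V^\perp$, and the key observation is that $\langle e, x_k\rangle = 0$ for every $k$ (because each $x_k\in V$). For a fixed index $j$, I would form the new configuration $X' := (X \setminus \{x_j\}) \cup \{e\} \in S(N,d)$; the only pairwise inner products differing between $X$ and $X'$ are those involving index $j$, and in $X'$ these all vanish.

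For $p\in(0,\infty)$ this bookkeeping gives directly
$$
\fp_{p}(X) - \fp_{p}(X') \;=\; 2 \sum_{k\neq j} |\langle x_j, x_k \rangle|^{p} \;\geq\; 0,
$$
so optimality of $X$ forces equality. Hence $x_j \perp x_k$ for every $k\neq j$, and since $j$ is arbitrary, $X$ is in fact an orthonormal system with $\dim V = N$. In the range $N\geq d$ that is relevant for the existence of frames of $N$ elements, this contradicts $V \subsetneq \R^d$.

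For $p=\infty$ the same swap yields
$$
\fp_{\infty}(X') \;=\; \max_{\substack{k,l\neq j\\ k\neq l}} |\langle x_k, x_l\rangle| \;\leq\; \fp_{\infty}(X).
$$
If the coherence $\fp_\infty(X)$ is positive, I would choose $j$ to be an index participating in a pair that attains this maximum; the inequality is then strict, contradicting optimality. Otherwise the coherence is zero, $X$ is orthonormal, and the same counting contradiction as before applies. I do not foresee a real obstacle: the entire argument boils down to identifying the right single-coordinate perturbation and observing that the unused direction $V^\perp$ kills all perturbed cross terms. The only subtle point worth flagging is the implicit assumption $N\geq d$, without which the conclusion can plainly fail (e.g.\ any orthonormal set of $N<d$ vectors achieves $\fp_p=0$ yet fails to span).
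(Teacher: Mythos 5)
Your argument for $p\in(0,\infty)$ is correct and is essentially the paper's: the same single-vector swap into $(\Span X)^{\perp}$, the only cosmetic difference being that the paper picks in advance a pair $k_1,k_2$ with $|\langle x_{k_1},x_{k_2}\rangle|>0$ (which must exist, since $N\ge d$ unit vectors in a proper subspace cannot be pairwise orthogonal) and gets a strict decrease at once, whereas you run the equality case of the same computation over all $j$ and reach the same contradiction. You are also right to flag the implicit hypothesis $N\ge d$, which the paper invokes at exactly this point.

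The $p=\infty$ case, however, has a genuine gap. After replacing $x_j$ by $e$ you only get $\fp_{\infty}(X')=\max_{k,l\ne j,\,k\ne l}|\langle x_k,x_l\rangle|$, and choosing $j$ to lie in \emph{a} pair attaining the maximum does not make this strictly smaller than $\fp_{\infty}(X)$: the maximum can be attained simultaneously by several pairwise disjoint pairs (say $|\langle x_1,x_2\rangle|=|\langle x_3,x_4\rangle|=c(X)$ with all other inner products smaller), and then deleting any single index leaves another maximal pair intact, so $\fp_{\infty}(X')=\fp_{\infty}(X)$ and no contradiction results. Nor can you iterate the swap: once $\dim(\Span X)=d-1$ there is essentially only one unit vector in the orthogonal complement, and sending two different $x_j$'s there produces a pair with inner product of modulus $1$. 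This is precisely the difficulty the paper's proof is built to handle: it collects \emph{all} indices $1,\dots,K$ that participate in a maximal pair and tilts each of them slightly, $y_i=\sqrt{1-\epsilon_i}\,x_i^*+\sqrt{\epsilon_i}\,e$; the cross term $\sqrt{\epsilon_i}\sqrt{\epsilon_j}$ that then appears in $\langle y_i,y_j\rangle$ is controlled by choosing the $\epsilon_i$ iteratively so that $\sqrt{\epsilon_i}\sqrt{\epsilon_j}\big/\bigl(1-\sqrt{1-\epsilon_i}\sqrt{1-\epsilon_j}\bigr)<\Fc_{\infty}$, which forces every new inner product strictly below $\Fc_{\infty}$. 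Some device of this kind---a simultaneous perturbation of every vector involved in a maximal pair---is needed to close your argument.
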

\begin{proof} 
We first consider the case $p\in (0, \infty)$. 
Suppose not, and say $X^{*}=\{x_k^*\}_{k=1}^N\subset S^{d-1}$ is a minimizer so that $\Span X^*$ is a strict subset of $\R^d$. Because there are $N \ge d$ vectors, it is possible to select two indices $k_1$ and $k_2$ such that $ |\ip{x^*_{k_1}}{x^*_{k_2}}| > 0$. Finally, select any unit-norm vector $x_0 \in (\Span X^*)^\perp$ and replace $x^*_{k_1}$ with $x_0$; i.e., define $Y = \{ x^*_k \}_{k \neq k_{1}} \cup \{ x_0 \}$. A direct computation shows that $\fp_{p, M, N}(Y) < \fp_{p, M, N}(X^*)$.

Now consider the case  $p=\infty$ and let $X^{*}=\{x_k^*\}_{k=1}^N\subset S^{d-1}$ be a minimizer  of $\fp_{\infty, N, d}$. Suppose that the dimension of $\Span(X^*) \leq d-1$.  Choose a unit vector $e \in (\Span X^*)^\perp$. There could be multiple pairs of vectors that achieve the maximal inner product $\Fc_{\infty}=\fp_{\infty, N,d}(X^*)$. Without loss of generality, we assume these vectors are among  the first $K$ vectors, that is, 
\begin{equation}\label{equ:K}
|\langle x_i^*, x_j^*\rangle|<\Fc_{\infty}, \quad\text{ if either $i$ or $j$ does not belong to} \, [1:K], i\neq j.
\end{equation}
 We will construct $Y=\{y_k\}_{k=1}^K\cup\{x_k^*\}_{k=K+1}^N$ that will have smaller coherence.

For $i=1,2, \cdots, K$, let $y_i=\sqrt{1-\epsilon_i}x_i^*+\sqrt{\epsilon_i}e$, where  $0<\epsilon_i<1$. Define $$f(a,b):=\frac{\sqrt{a}\sqrt{b}}{1-\sqrt{1-a}\sqrt{1-b}}\text{ on } (0,1]\times(0,1].$$ If we choose $\epsilon_i, \epsilon_j$ such that 
\begin{equation}\label{equ:mu}
f(\epsilon_i, \epsilon_j)=\frac{\sqrt{\epsilon_i}\sqrt{\epsilon_j}}{1-\sqrt{1-\epsilon_i}\sqrt{1-\epsilon_j}}<\Fc_{\infty}, 
\end{equation}
then 
\begin{equation}\label{equ:y}
|\langle y_i, y_j\rangle|=|\sqrt{1-\epsilon_i}\sqrt{1-\epsilon_j}\langle x_i^*,x_j^*\rangle+\sqrt{\epsilon_i}\sqrt{\epsilon_j}|\leq\sqrt{1-\epsilon_i}\sqrt{1-\epsilon_j}\Fc_{\infty}+\sqrt{\epsilon_i}\sqrt{\epsilon_j}<\Fc_{\infty}.
\end{equation}

We will pick $\epsilon_i$ iteratively to satisfy \eqref{equ:mu}:\newline
\noindent Step 1: pick $0<\epsilon_1<1$ arbitrarily.\newline 
\noindent Step $i$: given  $\epsilon_1, \cdots, \epsilon_{i-1}$, pick $\epsilon_i>0$ such that $f(\epsilon_j, \epsilon_i)<\Fc_\infty, \text{ for all }j=1,\cdots, i-1$. This is possible because $\lim_{\epsilon\rightarrow0}f(\epsilon_j, \epsilon)=0$ for all $j\leq i-1$.

 For convenience, let $y_k=x_k^*$ for $k=K+1, \cdots, N$. The new frame $Y=\{y_k\}_{k=1}^{K}$  has a smaller coherence because for any pair $i,j$, if $i,j\in [1:K]$, then $|\langle y_i, y_j\rangle|<\Fc_\infty$ by \eqref{equ:y}; if $i,j\in [K+1:N]$, then $|\langle y_i, y_j\rangle|=|\langle x_i^*, x_j^*\rangle|<\Fc_\infty$ by \eqref{equ:K}; if $i\in[1:K], j\in[K+1:N]$, then $|\langle y_i, y_j\rangle|=|\langle \sqrt{1-\epsilon_i}x_i^*+\sqrt{\epsilon_i}e, x_j^*\rangle|=\sqrt{1-\epsilon_i}|\langle x_i^*, x_j^*\rangle|<\Fc_\infty$.

This is a contradiction, so the optimal configuration must be a frame.
\end{proof}

Now we establish the relationship between large $p$ and $p=\infty$. 
\begin{proposition}\label{prop:infty}
$\lim_{p\rightarrow\infty}\Fc_p^{1/p}=\Fc_{\infty}$. Moreover, if $X^{(p)}$ is an optimal configuration for \eqref{equ:min}  when $p<\infty$ and $X$ is a cluster point of the set $\{X^{(p)}\}_{p>0}$, then $X$ optimizes the coherence as $\displaystyle X=\arg\min_{Y\in S(N,d)} c(Y)$. 
\end{proposition}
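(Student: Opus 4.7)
The plan is to exploit the standard comparison between $\ell^p$ and $\ell^\infty$ norms of the vector of absolute off-diagonal inner products. For any $X = \{x_k\}_{k=1}^N \in S(N,d)$, set $a_{k\ell} = |\langle x_k, x_\ell\rangle|$ for $k\neq \ell$; there are $M = N(N-1)$ such numbers, each in $[0,1]$, and
\[
c(X)^p \;=\; \max_{k\neq \ell} a_{k\ell}^p \;\leq\; \sum_{k\neq\ell} a_{k\ell}^p \;=\; \fp_p(X) \;\leq\; M \cdot c(X)^p,
\]
that is, $c(X) \leq \fp_p(X)^{1/p} \leq M^{1/p} c(X)$.

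For the first claim, minimizing the inequality $c(X) \leq \fp_p(X)^{1/p}$ over $X \in S(N,d)$ gives $\Fc_\infty \leq \Fc_p^{1/p}$. Conversely, pick any $X^\ast$ with $c(X^\ast) = \Fc_\infty$; then
\[
\Fc_p^{1/p} \;\leq\; \fp_p(X^\ast)^{1/p} \;\leq\; M^{1/p} c(X^\ast) \;=\; M^{1/p}\, \Fc_\infty.
\]
Sandwiching and letting $p\to\infty$, where $M^{1/p}\to 1$, yields $\Fc_p^{1/p}\to \Fc_\infty$.

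For the second claim, suppose $p_n \to \infty$ and $X^{(p_n)} \to X$. Since $S(N,d) \subset (\Sd)^N$ is compact, $X \in S(N,d)$, and the coherence $c(\cdot)$ is continuous (being a maximum of finitely many continuous functions), so $c(X^{(p_n)}) \to c(X)$. Applying the lower bound of the key inequality at the minimizer $X^{(p_n)}$,
\[
c(X^{(p_n)}) \;\leq\; \fp_{p_n}\!\bigl(X^{(p_n)}\bigr)^{1/p_n} \;=\; \Fc_{p_n}^{1/p_n},
\]
and letting $n\to\infty$ together with the first part gives $c(X) \leq \Fc_\infty$. Since $\Fc_\infty = \min_Y c(Y)$, we conclude $c(X) = \Fc_\infty$, i.e.\ $X$ is a Grassmannian configuration.

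No step is really difficult; the only mild subtlety is making sure that the cluster-point argument passes through compactness of $S(N,d)$ and continuity of $c$, and that the sandwich constant $M^{1/p}$ tends to $1$, both of which are routine.
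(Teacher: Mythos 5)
Your proof is correct and follows essentially the same route as the paper: both arguments rest on the sandwich $c(X)\leq \fp_p(X)^{1/p}\leq [N(N-1)]^{1/p}\,c(X)$, evaluated at suitable minimizers, together with continuity of the coherence for the cluster-point claim. No further comment is needed.
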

\begin{proof}
On one hand, we have
\begin{equation}\label{equ:1}
\Fc_p^{1/p}=\left(\sum_{i\neq j}|\langle x_i^{(p)}, x_j^{(p)}\rangle|^p\right)^{1/p}\geq c(X^{(p)})\geq\Fc_{\infty}.
\end{equation}
On the other hand, 
\begin{equation}\label{equ:2}
\Fc_p^{1/p}\leq\left(\sum_{i\neq j}|\langle x_i^{(\infty)}, x_j^{(\infty)}\rangle|^p\right)^{1/p}\leq \left(\sum_{i\neq j}\Fc_{\infty}^p\right)^{1/p}=\Fc_{\infty}[N(N-1)]^{1/p}.
\end{equation}
Taking the limit of both inequalities gives us the desired limit.

For the second part of the proposition, let $X=\lim_{k\rightarrow\infty}X^{(p_k)}$ where $p_k\rightarrow\infty$ as $k\rightarrow\infty$. Then by \eqref{equ:1} and \eqref{equ:2},
$$c(X^{(p_k)})\leq\Fc_{p_k}^{1/p_k}\leq\Fc_{\infty}[N(N-1)]^{1/p_k}.$$
Letting $k\rightarrow\infty$, by continuity of the coherence, we get $c(X)\leq\Fc_{\infty}$ which forces $c(X)=\Fc_{\infty}$.
\end{proof}

Next, we establish a continuity result of $\Fc_{p}$.

\begin{proposition}\label{prop:cont}
The minimal frame potential $\Fc_p$ is a  continuous and non-increasing function of $p\in(0,\infty)$. 
\end{proposition}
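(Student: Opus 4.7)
The plan is to verify the two claims separately, with monotonicity coming essentially for free from the pointwise inequality $t^{p_2} \leq t^{p_1}$ when $0 \leq t \leq 1$ and $p_1 \leq p_2$, and continuity coming from a standard compactness argument applied to $\Fc_p = \min_{X \in S(N,d)} \fp_{p,N,d}(X)$.

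First I would address monotonicity. For any $X = \{x_k\}_{k=1}^N \in S(N,d)$, Cauchy--Schwarz gives $|\ip{x_k}{x_\ell}| \leq 1$, so if $0 < p_1 \leq p_2 < \infty$ then $|\ip{x_k}{x_\ell}|^{p_2} \leq |\ip{x_k}{x_\ell}|^{p_1}$ termwise. Summing yields $\fp_{p_2,N,d}(X) \leq \fp_{p_1,N,d}(X)$ for every $X$, and taking the minimum over $X$ gives $\Fc_{p_2} \leq \Fc_{p_1}$.

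For continuity I would use that the map $\Phi : (0,\infty) \times S(N,d) \to \R$ defined by $\Phi(p,X) = \fp_{p,N,d}(X)$ is jointly continuous. Each term $|\ip{x_k}{x_\ell}|^p$ is continuous in $X$, and for fixed $X$ it is continuous in $p$ on $(0,\infty)$ (if the base is $0$ the value is $0$ for all $p > 0$; if the base is positive the map $p \mapsto t^p$ is classically continuous). Joint continuity follows since $|\ip{x_k}{x_\ell}|$ is bounded uniformly by $1$. Moreover $S(N,d)$, a finite product of unit spheres, is compact.

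Now I would invoke the standard fact that the minimum of a jointly continuous function over a compact set depends continuously on the parameter: fix $p_0 \in (0,\infty)$ and let $p_n \to p_0$. For any minimizer $X^{(p_0)}$ of $\Phi(p_0,\cdot)$, joint continuity gives
\begin{equation*}
\limsup_{n\to\infty} \Fc_{p_n} \leq \limsup_{n\to\infty} \Phi(p_n,X^{(p_0)}) = \Phi(p_0,X^{(p_0)}) = \Fc_{p_0}.
\end{equation*}
Conversely, pick minimizers $X^{(p_n)}$ of $\Phi(p_n,\cdot)$ and extract, by compactness of $S(N,d)$, a subsequence $X^{(p_{n_k})} \to X^\ast$. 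Then
\begin{equation*}
\liminf_{n\to\infty} \Fc_{p_n} = \lim_{k\to\infty} \Phi(p_{n_k}, X^{(p_{n_k})}) = \Phi(p_0, X^\ast) \geq \Fc_{p_0}.
\end{equation*}
Combining gives $\Fc_{p_n} \to \Fc_{p_0}$, proving continuity. I do not expect any serious obstacle; the only small point to verify carefully is joint continuity at pairs $(p_0, X_0)$ where some inner product vanishes, but this is handled by the uniform bound $|\ip{x_k}{x_\ell}|\leq 1$ and the fact that we stay away from $p = 0$.
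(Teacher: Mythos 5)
Your proof is correct. The monotonicity argument is identical to the paper's. For continuity, however, you take a genuinely different route: you treat $\Phi(p,X)=\fp_{p,N,d}(X)$ as a jointly continuous function on $(0,\infty)\times S(N,d)$ and run the standard compactness argument (evaluate at a fixed minimizer for the $\limsup$ bound, extract a convergent subsequence of minimizers for the $\liminf$ bound) — essentially a Berge-type maximum-theorem argument. The paper instead proves continuity quantitatively: it applies the elementary inequality $a^{q}\ln a\leq \frac{a^{p}-a^{q}}{p-q}$ termwise to the minimizer $X^{(p)}$ and obtains $0\leq \Fc_q-\Fc_p\leq (p-q)C_p$ with $C_p=\sum_{i\neq j,\,|\ip{x_i^{(p)}}{x_j^{(p)}}|\neq 0}\ln\frac{1}{|\ip{x_i^{(p)}}{x_j^{(p)}}|}$. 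The paper's approach buys an explicit (one-sided, local) Lipschitz-type modulus of continuity, at the cost of having to control the constant $C_p$; yours is softer and gives no rate, but is arguably more robust since it needs nothing beyond compactness of $S(N,d)$ and the joint continuity of $(p,t)\mapsto t^p$ on $(0,\infty)\times[0,1]$, which you correctly verify at $t=0$. The only cosmetic point in your write-up: in the $\liminf$ step you should first pass to a subsequence of $(p_n)$ along which $\Fc_{p_n}$ converges to the $\liminf$, and only then extract the convergent subsequence of minimizers; as written the equality $\liminf_n\Fc_{p_n}=\lim_k\Phi(p_{n_k},X^{(p_{n_k})})$ presumes the subsequence was chosen that way. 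This is a routine fix, not a gap.
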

\begin{proof}
We first prove that the function is non-increasing. Letting $p>q$, for any $X\in S(N,d)$, 
$$\fp_q(X)\geq\fp_p(X)\geq\Fc_p,$$
so $\Fc_q=\fp_q(X^{(p)})\geq\Fc_p$.

For continuity, we have
\begin{equation*}\label{equ:frac}
\sum_{i\neq j,|\langle x_i, x_j\rangle|\neq0} |\langle x_i, x_j\rangle|^{q}\ln|\langle x_i, x_j\rangle|\leq \frac{\fp_p(X)-\fp_q(X)}{p-q},
\end{equation*}
which comes from applying the inequality $\displaystyle a^{q} \ln a\leq\frac{a^p-a^q}{p-q}$ for $0<q<p, a>0$ to every nonzero term in the frame potential. 

So
\begin{align*}
0&\leq \frac{\Fc_q-\Fc_p}{p-q}\\
&=\frac{\Fc_q-\fp_p(X^{(p)})}{p-q}\\
&\leq\frac{\fp_q(X^{(p)})-\fp_p(X^{(p)})}{p-q}\\
&\leq \sum_{i\neq j,|\langle x_i^{(p)}, x_j^{(p)}\rangle|\neq0} |\langle x_i^{(p)}, x_j^{(p)}\rangle|^{q}\ln\frac{1}{|\langle x_i^{(p)}, x_j^{(p)}\rangle|}\leq\sum_{i\neq j,|\langle x_i^{(p)}, x_j^{(p)}\rangle|\neq0}\ln\frac{1}{|\langle x_i^{(p)}, x_j^{(p)}\rangle|}:=C_p.
\end{align*}
Therefore $0\leq\Fc_q-\Fc_p\leq(p-q)C_p$, which implies the continuity of $\Fc$. 
\end{proof}

Next, for fixed $p,d$, we consider the asymptotics of $\Fc_{p, N, d}$ as the number of points $N$ grows. In particular, we show that $\Fc_N\sim N^2$, see Proposition~\ref{prop:asymp}. We note that this behavior was numerically  observed in  \cite{AvGoDuSt}.  We begin by establishing some preliminary results.

\begin{lemma}
Given $d\geq 2$, and  $p\in(0,\infty)$, the sequence $\left\{\frac{\Fc_{p,N}}{N(N-1)}\right\}_{N\geq d+1}$ is a non-decreasing sequence.
\end{lemma}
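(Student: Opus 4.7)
The plan is to use a standard averaging/removal argument: take a minimizer of $\fp_{p,N+1}$, delete each vector in turn to produce $N+1$ admissible configurations in $S(N,d)$, and compare the sum of their potentials to the original potential.

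Concretely, let $X^{*}=\{x_1^{*},\ldots,x_{N+1}^{*}\}\in S(N+1,d)$ achieve $\fp_{p,N+1}(X^{*})=\Fc_{p,N+1}$. For each index $k\in[1:N+1]$ define
\[
X^{*}_{k}:=X^{*}\setminus\{x_k^{*}\}\in S(N,d),
\]
so that by the definition of $\Fc_{p,N}$,
\[
\fp_{p,N}(X^{*}_k)\ \geq\ \Fc_{p,N}\qquad\text{for every }k.
\]
Summing over $k$ and using the combinatorial fact that each ordered pair $(i,j)$ with $i\neq j$ in $[1:N+1]$ survives in $X^{*}_k$ for exactly the $N-1$ indices $k\notin\{i,j\}$, I would obtain the identity
\[
\sum_{k=1}^{N+1}\fp_{p,N}(X^{*}_k)\ =\ (N-1)\sum_{i\neq j}|\ip{x_i^{*}}{x_j^{*}}|^{p}\ =\ (N-1)\,\Fc_{p,N+1}.
\]
Combining this with the termwise lower bound yields
\[
(N+1)\,\Fc_{p,N}\ \leq\ (N-1)\,\Fc_{p,N+1},
\]
which after dividing by $N(N+1)(N-1)$ gives exactly the monotonicity
\[
\frac{\Fc_{p,N}}{N(N-1)}\ \leq\ \frac{\Fc_{p,N+1}}{(N+1)N}.
\]

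There is really no hard step here: the only content is the double-counting identity, and verifying that $X^{*}_k$ still has $N\geq d$ unit-norm vectors so it is a valid competitor for $\Fc_{p,N}$ (which is why the statement starts the sequence at $N=d+1$, so that after one deletion we still live in $S(N,d)$ with $N\geq d$). The argument is oblivious to the exponent $p\in(0,\infty)$ since it only uses the sum structure of $\fp_p$ and the definition of the minimum. The sharpest version of the inequality is obtained by applying the averaging argument directly to the minimizer of the finer potential, so no further optimization is required.
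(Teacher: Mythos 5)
Your proof is correct and is essentially the paper's own argument: both delete each vector in turn from a minimizer of the larger configuration, bound each deleted configuration below by the smaller minimum, and sum over deletions using the same double-counting identity (the paper writes it as $N\Fc_N \geq N\Fc_{N-1} + 2\Fc_N$, which is your $(N+1)\Fc_{p,N}\leq (N-1)\Fc_{p,N+1}$ after reindexing). No substantive difference.
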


\begin{proof}
Let $X^{(N)}=\{x_i^{(N)}\}_{i=1}^N$ be an optimal configuration for $\fp_{p,N,d}$. For each $k\in[1:N]$,
\begin{align}\label{equ:inc}
\Fc_N=\fp(X^{(N)})=\fp\left(X^{(N)}\backslash\{x_k^{(N)}\}\right)+2\sum_{j\neq k}|\langle x_k^{(N)}, x_j^{(N)}\rangle|^p\geq\Fc_{N-1}+2\sum_{j\neq k}|\langle x_k^{(N)}, x_j^{(N)}\rangle|^p.
\end{align}
Summing \eqref{equ:inc} over $k$, we obtain
$$N\Fc_N\geq N\Fc_{N-1}+2\Fc_N\Longrightarrow(N-2)\Fc_N\geq N\Fc_{N-1}\Longrightarrow\frac{\Fc_N}{N(N-1)}\geq\frac{\Fc_{N-1}}{(N-1)(N-2)}.$$
\end{proof}
 
It follows that  $\tau:=\lim_{N\to \infty}\frac{\Fc_{p,N}}{N^2}$ exists. In fact, in the minimal energy literature, $\tau$ is called the transfinite diameter due to Fekete.  Furthermore, $\tau$ is related to the continuous version of the frame potential, which is introduced in \cite{EhlOko2012}. More specifically, given a probabilistic measure $\mu$ on the sphere, the \emph{probabilistic $p$ frame potential} is defined as
\begin{equation}
\pfp_{p,d}(\mu):=\int_{\Sb^{d-1}}\int_{\Sb^{d-1}}|\langle x,y\rangle|^pd\mu(x)d\mu(y).
\end{equation}
Let $\mathcal{M}(\Sb^{d-1})$ be the collection of all probabilistic measures on the sphere. Simple compactness and continuity arguments show that 
\begin{equation}\label{equ:pfp}
\Pc_{p,d}:=\min_{\mu\in\mathcal{M}(\Sb^{d-1})}\pfp_{p,d}(\mu)
\end{equation}
exists. 

Given any $N$ point configuration $X$, its normalized counting measure is defined as $$\nu_{X}:=\frac{1}{N}\sum_{x\in X}\delta_x.$$
We have 
\begin{equation}\label{equ:relation}
\pfp_{p,d}(\nu_X)=\int\int |\langle x,y\rangle|^pd\nu_X(x)d\nu_X(y)=\frac{1}{N^2}\sum_{i=1}^N\sum_{j=1}^N|\langle x_i,x_j\rangle|^p=\frac{\fp_{p,N,d}(X)+N}{N^2}.
\end{equation}
Consequently, if $X$ is an optimal configuration, i.e.,  $\Fc_{p,N}=\fp_{p, N, d}(X)$,  then  by~\eqref{equ:relation}, it is plausible that $\tau= \Pc_{p,d}$. This is indeed the case, and it was proved in a more general setting by Farkas and Nagy \cite{FN08}. For the sake of completeness, we  reproduce their proof below.

\begin{lemma}\label{lem:1}
Given $d\geq 2$ and  $p\in(0,\infty)$, $\displaystyle \tau=\lim_{N\rightarrow\infty}\frac{\Fc_{p,N,d}}{N^2}\leq\Pc_{p,d}$.
\end{lemma}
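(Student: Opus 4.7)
The plan is to use the probabilistic method: take an optimal probability measure for $\Pc_{p,d}$, sample $N$ points i.i.d.\ from it, and show the expected $p$-frame potential is at most $N(N-1)\Pc_{p,d}$. Since some realization must do at least as well as the expectation, this furnishes an upper bound on $\Fc_{p,N,d}$ that gives the desired asymptotic estimate.

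In more detail, first I would note that the infimum in \eqref{equ:pfp} is attained (this is already observed in the excerpt just before the lemma), so we may fix $\mu^{*}\in\mathcal{M}(\Sb^{d-1})$ with $\pfp_{p,d}(\mu^{*})=\Pc_{p,d}$. Next, let $X=\{x_1,\ldots,x_N\}$ be a random configuration, where the $x_i$ are drawn i.i.d.\ from $\mu^{*}$. Since $\mu^{*}$ is supported on $\Sb^{d-1}$, every such sample automatically lies in $S(N,d)$, so it is a legitimate candidate in \eqref{equ:min}.

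The key computation is that for $i\neq j$, independence gives
\begin{equation*}
\E\bigl[|\langle x_i,x_j\rangle|^{p}\bigr]=\int_{\Sb^{d-1}}\!\!\int_{\Sb^{d-1}}|\langle x,y\rangle|^{p}\,d\mu^{*}(x)\,d\mu^{*}(y)=\Pc_{p,d},
\end{equation*}
so summing over the $N(N-1)$ off-diagonal pairs yields $\E[\fp_{p,N,d}(X)]=N(N-1)\Pc_{p,d}$. By the standard averaging argument, some realization $X_{N}$ of the random configuration satisfies $\fp_{p,N,d}(X_{N})\leq N(N-1)\Pc_{p,d}$, and therefore
\begin{equation*}
\Fc_{p,N,d}\leq \fp_{p,N,d}(X_{N})\leq N(N-1)\Pc_{p,d}.
\end{equation*}
Dividing by $N^{2}$ and letting $N\to\infty$ gives $\tau=\lim_{N\to\infty}\Fc_{p,N,d}/N^{2}\leq\Pc_{p,d}$.

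There is no real obstacle in this argument; the only small point that needs care is that the sampled points may coincide (when $\mu^{*}$ has atoms), but this causes no issue because the sum in $\fp_{p,N,d}$ is over $i\neq j$ and the bound on the expectation is unaffected. The attainment of the minimum in \eqref{equ:pfp} is used only to pick $\mu^{*}$; if one preferred, the same argument goes through with a near-optimal $\mu$ satisfying $\pfp_{p,d}(\mu)<\Pc_{p,d}+\eps$ and then letting $\eps\to 0$.
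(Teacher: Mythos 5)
Your argument is correct and is essentially identical to the paper's proof: the paper also fixes an optimal measure $\mu^{*}$, integrates $\fp_{p,N,d}(X)$ against the product measure $d\mu^{*}(x_1)\cdots d\mu^{*}(x_N)$, and bounds the minimum by this average to obtain $\Fc_{p,N,d}\leq N(N-1)\Pc_{p,d}$. The probabilistic phrasing (i.i.d.\ sampling and the first-moment method) is just a restatement of that same averaging computation.
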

\begin{proof}
Let $\mu^*$ be the optimal probabilistic measure, that is,
$$\int\int|\langle x,y\rangle|^pd\mu^*(x)d\mu^*(y)=\Pc_{p,d}=\pfp_{p,d}(\mu^*).$$ Consequently, 

\begin{align*}
\Fc_{p,N,d}&=\int\cdots\int \left[\min_X \fp(X)\right]d\mu^*(x_1)\cdots d\mu^*(x_N)\\
& \leq \int\cdots\int \fp(X)d\mu^*(x_1)\cdots d\mu^*(x_N)\\
&=\sum_{i\neq j}\int\cdots\int |\langle x_i,x_j\rangle|^p d\mu^*(x_1)\cdots d\mu^*(x_N)=\sum_{i\neq j}\Pc_{p,d}=N(N-1)\Pc_{p,d}.
\end{align*}
The result follows by dividing $N^2$ on both sides and taking the limit.
\end{proof}

We can now state and prove that $\Fc_{N}\sim N^2$ as $N\to \infty$. 

\begin{proposition}\label{prop:asymp}
Given $d\geq 2$ and  $p\in(0,\infty)$, we have $\displaystyle \tau= \lim_{N\rightarrow\infty}\frac{\Fc_{p,N,d}}{N^2}=\Pc_{p,d}$. Moreover, if $\{X_N\}_{N\geq d+1}$ is a sequence of $N$-point configurations such that $\lim_{N\rightarrow\infty}\frac{\fp_N(X_N)}{N^2}=\tau$, then every weak star cluster point $\nu^*$ of the normalized counting measure $\nu_{X_N}=\frac{1}{N}\sum_{x\in X_N}\delta_x$ solves \eqref{equ:pfp}, that is $\pfp_{p,d}(\nu^*)=\Pc_{p,d}$. In particular, this holds for any sequence of the optimal configurations of $\fp_N$.
\end{proposition}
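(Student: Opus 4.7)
The plan is to combine Lemma \ref{lem:1}, which already yields the inequality $\tau \leq \Pc_{p,d}$, with a matching lower bound obtained by evaluating $\pfp_{p,d}$ on the normalized counting measures of optimal configurations. The bridge between the discrete and continuous potentials is the identity \eqref{equ:relation}, which says
\[
\pfp_{p,d}(\nu_X) \;=\; \frac{\fp_{p,N,d}(X)+N}{N^2}.
\]
So if $X_N^*$ is an optimal configuration for $\fp_{p,N,d}$, then $\pfp_{p,d}(\nu_{X_N^*})$ differs from $\Fc_{p,N,d}/N^2$ by $1/N$, and hence tends to $\tau$.

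For the reverse inequality, I would invoke weak-$*$ compactness of $\mathcal{M}(\Sb^{d-1})$ (Banach--Alaoglu, since $\Sb^{d-1}$ is compact) to extract a weak-$*$ convergent subsequence $\nu_{X_{N_k}^*} \rightharpoonup \mu^*$. The kernel $K(x,y):=|\langle x,y\rangle|^p$ is continuous on the compact product $\Sb^{d-1}\times \Sb^{d-1}$, and I would argue that $\mu\mapsto \pfp_{p,d}(\mu)=\int\!\!\int K\,d\mu\,d\mu$ is weak-$*$ continuous on $\mathcal{M}(\Sb^{d-1})$; this gives $\pfp_{p,d}(\mu^*)=\lim_k \pfp_{p,d}(\nu_{X_{N_k}^*})=\tau$. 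Since $\pfp_{p,d}(\mu^*)\geq \Pc_{p,d}$ by definition of the minimum, we conclude $\tau \geq \Pc_{p,d}$, and Lemma \ref{lem:1} completes the equality $\tau = \Pc_{p,d}$.

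The second (``moreover'') assertion follows by the same mechanism. For any sequence $\{X_N\}$ with $\fp_N(X_N)/N^2\to\tau$, identity \eqref{equ:relation} yields $\pfp_{p,d}(\nu_{X_N})\to \tau = \Pc_{p,d}$. If $\nu^*$ is any weak-$*$ cluster point of $\{\nu_{X_N}\}$, say $\nu_{X_{N_k}}\rightharpoonup \nu^*$, then the same weak-$*$ continuity of $\pfp_{p,d}$ forces $\pfp_{p,d}(\nu^*)=\Pc_{p,d}$, so $\nu^*$ solves \eqref{equ:pfp}. Applied to optimal sequences $X_N=X_N^*$, which satisfy $\fp_N(X_N^*)/N^2=\Fc_{p,N,d}/N^2\to \tau$ by the first part, this covers the final claim.

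The single nontrivial point is the weak-$*$ continuity of $\mu\mapsto \int\!\!\int K\,d\mu\,d\mu$, i.e.\ the implication $\nu_k \rightharpoonup \mu^* \Rightarrow \nu_k\otimes\nu_k\rightharpoonup \mu^*\otimes\mu^*$ tested against the continuous $K$. I would handle this by uniformly approximating the continuous kernel $K$ on $\Sb^{d-1}\times \Sb^{d-1}$ by finite sums of separable functions $\sum_{i} f_i(x)g_i(y)$ via the Stone--Weierstrass theorem; for each separable kernel the convergence $\int f_i\,d\nu_k \cdot \int g_i\,d\nu_k \to \int f_i\,d\mu^* \cdot \int g_i\,d\mu^*$ is immediate from the definition of weak-$*$ convergence, and a standard $\varepsilon/3$ argument together with the uniform bound $\|\nu_k\|_{TV}=1$ transfers this to $K$. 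Everything else is bookkeeping with the identity \eqref{equ:relation} and compactness.
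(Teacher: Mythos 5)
Your proposal is correct and follows essentially the same route as the paper: the inequality $\tau\leq \Pc_{p,d}$ comes from Lemma \ref{lem:1}, and the reverse inequality comes from passing to a weak-$*$ cluster point of the normalized counting measures and using the identity \eqref{equ:relation} together with $\pfp_{p,d}(\nu^*)\geq \Pc_{p,d}$. The only difference is that you spell out the weak-$*$ continuity of $\mu\mapsto\pfp_{p,d}(\mu)$ via Stone--Weierstrass, a step the paper's proof invokes implicitly with the phrase ``by weak star convergence.''
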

\begin{proof}
By weak star convergence and \eqref{equ:relation}
\begin{align*}
\Pc_{p,d}\leq\pfp_{p,d}(\nu^*)=\lim_{N\rightarrow\infty}\pfp_{p,d}(\nu_{X_N})=\lim_{N\rightarrow\infty}\frac{\fp_{p,N,d}(X_N)+N}{N^2}=\tau.
\end{align*}
In view of Lemma \ref{lem:1}, we have $\tau=\Pc_{p,d}$ and $\nu^*$ is an optimal probabilistic measure.
\end{proof}

The exact value of $\tau$ can be found in many cases. We list two examples in the following corollary.

\begin{corollary}

(a) When $d\geq 2$ and  $p\in(0,2]$, we have  $\displaystyle \lim_{N\rightarrow\infty}\frac{\Fc_{p,N,d}}{N^2}=\Pc_{p,d}=\frac{1}{d}$.

(b) When $d=2$ and $p$ is an even integer, we have $\displaystyle \lim_{N\rightarrow\infty}\frac{\Fc_{p,N,2}}{N^2}=\Pc_{p,2}=\frac{1\cdot3\cdot5\cdots (p-1)}{2\cdot4\cdot6\cdots p}$.
\end{corollary}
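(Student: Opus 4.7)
The overall plan is to invoke Proposition \ref{prop:asymp}, which identifies the limit $\lim_{N\to\infty}\Fc_{p,N,d}/N^2$ with the continuous potential minimum $\Pc_{p,d}$; both parts then reduce to computing $\Pc_{p,d}$ in the respective regimes via matching upper and lower bounds.

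For part (a), I would first observe that a universal upper bound $\Pc_{p,d}\le 1/d$ holds for every $p>0$: testing against the normalized counting measure $\nu = d^{-1}\sum_{i=1}^d\delta_{e_i}$ of an orthonormal basis gives $\pfp_{p,d}(\nu) = d^{-2}\sum_{i,j}|\langle e_i,e_j\rangle|^p = 1/d$. For the matching lower bound when $p\in(0,2]$, exploit $|\langle x,y\rangle|\le 1$ to get the pointwise inequality $|\langle x,y\rangle|^p \ge |\langle x,y\rangle|^2$, so that $\pfp_{p,d}(\mu)\ge \pfp_{2,d}(\mu)$ for every $\mu\in\mathcal{M}(\Sd)$. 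Introducing the moment matrix $M_{ij} = \int_{\Sd} x_ix_j\,d\mu(x)$, the squared inner product expands to give $\pfp_{2,d}(\mu) = \sum_{i,j}M_{ij}^2 = \mathrm{Tr}(M^2)$. Since $M$ is positive semidefinite with $\mathrm{Tr}(M) = \int\|x\|^2\,d\mu = 1$, Cauchy--Schwarz applied to its eigenvalues yields $\mathrm{Tr}(M^2)\ge (\mathrm{Tr} M)^2/d = 1/d$, so $\Pc_{p,d}=1/d$.

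For part (b), the lower bound is immediate from the spherical-design inequality of Proposition \ref{thm:spherical}: dividing $\Fc_{p,N,2}\ge N^2\,\frac{1\cdot 3\cdots (p-1)}{2\cdot 4\cdots p} - N$ by $N^2$ and letting $N\to\infty$, Proposition \ref{prop:asymp} gives $\Pc_{p,2}\ge\frac{1\cdot 3\cdots (p-1)}{2\cdot 4\cdots p}$. The matching upper bound comes from testing against the uniform (Haar) measure $\sigma$ on $\mathbb{S}^1$: writing $x=(\cos\theta,\sin\theta)$, $y=(\cos\phi,\sin\phi)$ so that $|\langle x,y\rangle|=|\cos(\theta-\phi)|$, the double integral collapses to $\pfp_{p,2}(\sigma) = \frac{1}{\pi}\int_0^\pi |\cos\theta|^{2k}\,d\theta$, and Wallis' formula evaluates this to $\binom{2k}{k}/4^k$, which equals $\frac{1\cdot 3\cdots (2k-1)}{2\cdot 4\cdots 2k}$ after rewriting $(2k)! = 2^k k!\cdot(1\cdot 3\cdots(2k-1))$.

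Neither part presents real difficulty: the only items requiring care are the identification of $\pfp_{2,d}(\mu)$ with $\mathrm{Tr}(M^2)$ in (a), and the Wallis-integral bookkeeping in (b); both are standard manipulations. The conceptual content is already packaged inside Propositions \ref{prop:asymp} and \ref{thm:spherical}, and what remains is simply to exhibit the right test measure on each side.
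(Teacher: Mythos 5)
Your proof is correct, but it runs in the opposite direction from the paper's. For part (a) the paper never touches the continuous problem directly: it invokes the known result that for $N=kd$ the minimizers are $k$ copies of an orthonormal basis (\cite[Theorem 3.5]{EhlOko2012}), computes $\Fc_{p,kd,d}=(k-1)kd$ along that subsequence, and uses the existence of the limit $\tau$ to conclude, remarking that this \emph{recovers} $\Pc_{p,d}=1/d$. You instead prove $\Pc_{p,d}=1/d$ from scratch --- upper bound from the ONB counting measure (correct, since the diagonal terms of $\pfp$ contribute $d/d^2$), lower bound from $|\langle x,y\rangle|^p\ge|\langle x,y\rangle|^2$ together with $\pfp_{2,d}(\mu)=\trace(M^2)\ge(\trace M)^2/d=1/d$ --- and then push the answer to the discrete side via Proposition \ref{prop:asymp}. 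Your route is more self-contained (it does not need the nontrivial discrete classification from \cite{EhlOko2012}) and in effect reproves \cite[Theorem 4.9]{EhlOko2012}. For part (b) the paper again argues discretely: $2N$ equally spaced points form a spherical $(2N-1)$-design, so for $N$ large the inequality in Proposition \ref{thm:spherical} is an \emph{equality}, $\Fc_{p,N,2}=N^2\frac{1\cdot3\cdots(p-1)}{2\cdot4\cdots p}-N$, and the limit is immediate; you keep only the inequality as a lower bound and supply the matching upper bound by evaluating $\pfp_{p,2}$ at the uniform measure via Wallis' integral. Both computations check out (your identity $\binom{2k}{k}/4^k=(2k-1)!!/(2k)!!$ is right), and your version has the small virtue of not needing the design property of equally spaced points, at the cost of not identifying the exact value of $\Fc_{p,N,2}$ for finite $N$, which the paper's argument gives for free.
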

\begin{proof} 

(a) By  \cite[Theorem 3.5]{EhlOko2012} we know  that when $N=kd$, the frame potential is minimized by $k$ copies of orthonormal basis. So $\lim_{N\rightarrow\infty}\frac{\Fc_{p,N,d}}{N^2}=\lim_{k\rightarrow\infty}\frac{\Fc_{p,kd,d}}{(kd)^2}=\lim_{k\rightarrow\infty}\frac{(k-1)kd}{(kd)^2}=\frac{1}{d}$. Note that this recovers  \cite[Theorem 4.9]{EhlOko2012}, which states that $\Pc_{p,d}=\frac{1}{d}$.

(b) In dimension $d=2$, it is known that $2N$ equally spaced points on the unit circle are spherical $(2N-1)$-design (\cite[Section 4]{venkov2001reseaux}), so Proposition \ref{thm:spherical} implies that $X_N^{(h)}$ is an optimal configuration if $p\leq 2N-2$ is an even integer. In other words, with fixed even integer $p$, when $N$ is large enough, $\left(X_N^{(h)}\right)^{sym}$ is going to be a $(2N-1)$-design (hence $p$-design), so the equality in Proposition \ref{thm:spherical} holds and we get the desired result.
\end{proof}


\section{Optimal configurations in dimension $2$}\label{sec3}

This section focuses on the case $d=2$, when the points are on the unit circle $\Sb^1\subset\R^2$. 

\subsection{A class of minimal energy problems}\label{subsec3.1}
 We recall that  when $N=2k$ is even and $0<p< 2$, the solution to~\eqref{equ:min} was given in \cite[Theorem 3.5]{EhlOko2012}, where it was established that the minimizers are $k$ copies of  any orthonormal basis of $\R^2$. The case $p=2$ was settled by Benedetto and Fickus \cite{BenFic2003}. 
 In order to address other values of $p$, we will consider a more general problem
\begin{equation}\label{equ:f}
\min_{X\subset C_r, |X|=N} \sum_{i\neq j}f(\|x_i-x_j\|^2),
\end{equation}
where $f: (0,4r^2]\rightarrow \R$ is a nonnegative and decreasing function, and $C_r$ is a $1-$dimensional circle with radius $r$. This circle $C_r$ does not need to be centered at 0 and could be in any dimension. It will become clear later why we require points on a general circle instead of the usual  $\Sb^1$.

The first result only requires $f$ to be convex, but it only works for up to 4 points.
\begin{theorem}\label{thm:4}
Given $r>0$, let $f:(0,4r^2] \rightarrow \R$ be a decreasing convex function. Any configuration $X_4^*$ of $4$ equally spaced points on $C_r$ is an optimal configuration of \eqref{equ:f}. If in addition, $f$ is strictly convex, then no other $4$-point configuration is optimal.
\end{theorem}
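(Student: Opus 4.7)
The plan is to reduce the problem to a one-parameter family of rectangles inscribed in $C_r$ by a single application of Jensen's inequality, then finish by convexity again. I would parametrize the four points in cyclic order by consecutive arc lengths $\alpha_1,\ldots,\alpha_4>0$ with $\sum\alpha_i=2\pi$; the six squared chord lengths are $d_{i,i+1}^2=4r^2\sin^2(\alpha_i/2)$ for the sides together with $d_{13}^2=4r^2\sin^2((\alpha_1+\alpha_2)/2)$ and $d_{24}^2=4r^2\sin^2((\alpha_2+\alpha_3)/2)$ for the two diagonals. With $s=(\alpha_1+\alpha_3)/2$, $u=(\alpha_1-\alpha_3)/2$, $v=(\alpha_2-\alpha_4)/2$, the standard sum-to-product identities give
\begin{align*}
A_1 &:= \tfrac12(d_{12}^2+d_{34}^2) = 2r^2(1-\cos s\cos u),\\
A_2 &:= \tfrac12(d_{14}^2+d_{23}^2) = 2r^2(1+\cos s\cos v),\\
A_3 &:= \tfrac12(d_{13}^2+d_{24}^2) = 2r^2(1+\cos u\cos v).
\end{align*}

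Applying Jensen's inequality to each of these three pairs of chords (two pairs of opposite sides and the pair of diagonals) yields
\[
\sum_{i<j}f(d_{ij}^2)\;\geq\;2\bigl[f(A_1)+f(A_2)+f(A_3)\bigr],
\]
with equality iff the pairs of opposite chords coincide pointwise, i.e., $\alpha_1=\alpha_3$ and $\alpha_2=\alpha_4$, which is exactly the rectangle locus $u=v=0$. On that locus $A_1+A_2=4r^2$ identically while $A_3=4r^2$; a second application of convexity of $f$ to the fixed-sum pair $(A_1,A_2)$ gives $f(A_1)+f(A_2)\geq 2f(2r^2)$ with equality iff $\cos s=0$, so the unique rectangular minimizer is the square, producing the value $\sum_{i<j}f(d_{ij}^2)=4f(2r^2)+2f(4r^2)$ claimed in the theorem.

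It remains to verify the three-variable inequality $f(A_1)+f(A_2)+f(A_3)\geq 2f(2r^2)+f(4r^2)$ for every admissible $(s,u,v)$. I would handle this by a short sign-based case analysis of the three quantities $\cos s(\cos v-\cos u)$, $\cos u(\cos v-\cos s)$, $\cos v(\cos s+\cos u)$, which respectively control whether the pairwise sums $A_1+A_2$, $A_1+A_3$, $A_2+A_3$ lie below or above $4r^2$. Whenever one of these three quantities is $\le 0$, the corresponding $2{+}1$ regrouping of $(A_1,A_2,A_3)$, combined with convexity of $f$ on the paired arguments and monotonicity of $f$ on the separated one (which always satisfies $A_k\le 4r^2$ since $|\cos\cdot\cos\cdot|\leq 1$), closes the estimate. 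The key algebraic input throughout is $A_1+A_2+A_3\leq 8r^2$, equivalent to $\cos s(\cos v-\cos u)+\cos u\cos v\leq 1$, with equality exactly on the rectangle locus; strict convexity of $f$ then makes every Jensen step strict away from the square, yielding uniqueness.

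The main obstacle I anticipate is the case in which all three sign-checks are strictly positive: here $A_1<2r^2<A_2,A_3$ and no single $2{+}1$ pair-Jensen regrouping immediately suffices, so the argument needs a finer second-order convexity estimate. My plan there is to combine the tangent-line bounds $f(A_j)\geq f(y_j)+f'(y_j)(A_j-y_j)$ anchored at $y_1=y_2=2r^2$, $y_3=4r^2$ with the global constraint $A_1+A_2+A_3\leq 8r^2$, exploiting the coupling $A_3-2r^2=2r^2\cos u\cos v$ between the three averaged arguments that is forced by the four-point structure to control the residual term.
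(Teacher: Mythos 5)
Your reduction is organized differently from the paper's: the paper groups the twelve ordered pairs by cyclic step $k\in\{1,2,3\}$, applies Jensen within each group of four, and then tries to maximize $\sum_{i=1}^4\sin^2(\alpha_{ik}/2)$ subject to $\sum_i\alpha_{ik}=2\pi k$ by Lagrange multipliers, whereas you pair opposite chords and reduce to a three-variable inequality. Your identities for $A_1,A_2,A_3$ and your $2{+}1$ regroupings are all correct. But the case you flag as the ``main obstacle'' is a genuine and, as it turns out, unfixable gap. The tangent-line plan cannot work: after summing the three tangent bounds, the residual you must control is $|f'(4r^2)|\,(4r^2-A_3)\ \ge\ |f'(2r^2)|\,(A_1+A_2-4r^2)$, and a decreasing convex $f$ may have $|f'(4r^2)|$ arbitrarily small compared to $|f'(2r^2)|$, so the constraint $A_1+A_2+A_3\le 8r^2$ is not enough. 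Worse, the three-variable inequality you reduced to is false in that regime: for arc lengths $(\alpha_1,\alpha_2,\alpha_3,\alpha_4)=(7\pi/12,\,2\pi/3,\,\pi/12,\,2\pi/3)$ on the unit circle one gets $(A_1,A_2,A_3)\approx(1.293,\,3,\,3.414)$, and with $f(t)=\max(0,\,3.2-t)$ (nonnegative, non-increasing, convex) one finds $f(A_1)+f(A_2)+f(A_3)\approx 2.107<2.4=2f(2)+f(4)$.

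The deeper issue is that the statement itself fails for such $f$, so no argument can close your last case. The same four points give $\sum_{i<j}f(\|x_i-x_j\|^2)\approx 0.682+0+0.2+0.2+0+3.132=4.214$, strictly below the square's value $4f(2)+2f(4)=4.8$; replacing $f$ by $f(t)+\eps(4-t)$ makes it strictly decreasing while preserving the strict inequality. (Intuitively: nearly doubling a vertex of an inscribed equilateral triangle costs at most $f(0^+)=3.2$ on one pair but pushes the other five pairs out to squared distance $3$; only a lower bound on the growth of $f$ near $0$, not mere convexity, rules this out.) The paper's own proof stumbles on exactly this configuration: it asserts $\sum_{i=1}^4\sin^2\beta_i\le 4\sin^2(\pi/4)=2$ under $\sum_i\beta_i=\pi$, $\beta_i\ge0$, but the Lagrange analysis there misses the asymmetric critical points --- $(\pi/3,\pi/3,\pi/3,0)$ gives $9/4$, and the interior point $\beta=(7\pi/24,\,\pi/3,\,\pi/24,\,\pi/3)$ already gives $\approx 2.146>2$. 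So your route and the paper's stall at the same true obstruction, and the theorem needs the stronger hypothesis of Corollary \ref{cor:CK} (complete monotonicity up to order $4$ when $N=4$), which controls $f''''$. Note that this in turn leaves part (a) of Theorem \ref{thm:p} without a complete proof for the non-even exponents $p\in(2,4)\cup(4,6)$, since $t\mapsto t^{p/2}$ is not absolutely monotone up to order $4$ there.
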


\begin{proof} 
Let $X_4=\{x_i\}_{i=1}^4$ be an arbitrary configuration with $x_i$ ordered counter clockwise. Let $\alpha_{ik}\in[0,2\pi)$ be the angle between $x_i$ and $x_{i+k}$ for any $k\in[1:3]$. The index of the vectors is cyclic as $x_i=x_{i-4}$. Then $\|x_i-x_{i+k}\|^2=2r^2-2r^2\cos\alpha_{ik}=4r^2\sin^2\frac{\alpha_{ik}}{2}.$ It is evident that $\sum_{i=1}^4\alpha_{ik}=2\pi k$.  Using the convexity of $f$,
\begin{align}\label{equ:convex}
\sum_{i\neq j}f\big(\|x_i-x_j\|^2\big)&=\sum_{k=1}^{3}\sum_{i=1}^4 f\big(\|x_i-x_{i+k}\|^2\big)=4\sum_{k=1}^{3}\frac{1}{4}\sum_{i=1}^4 f\big(\|x_i-x_{i+k}\|^2\big)\\ \notag 
&\geq 4\sum_{k=1}^{3}f\left(\frac{1}{4}\sum_{i=1}^4\|x_i-x_{i+k}\|^2\right)=4\sum_{k=1}^{3}f\left(\frac{4r^2}{4}\sum_{i=1}^4\sin^2\frac{\alpha_{ik}}{2}\right).
\end{align}

Next,   let $\beta_{ik}=\alpha_{ik}/2$. In order to minimize the right hand side of \eqref{equ:convex}, we solve
\begin{equation*}
 \max \sum_{i=1}^4 \sin^2 \beta_{ik} \qt{subject to}
	\beta_{ik}\geq0, \ \sum_{i=1}^4 \beta_{ik} = \pi k. 
\end{equation*}

When $k=1$, we let $\beta_i=\beta_{i1}$ for short. Using Lagrange multipliers, we have
  $0=\frac{\partial}{\partial\beta_{j}}[\sum_{i=1}^4\sin^2\beta_{i}+\lambda(\sum_{i=1}^4\beta_{i}-\pi)]=\sin2\beta_{j}+\lambda$, which implies that
 
$$ \sin2\beta_{i}=\sin2\beta_{j}\Longrightarrow 2\beta_{i}=2\beta_{j}, \text{ or } 2\beta_i+2\beta_j=\pi, $$ since $\sum_{i=1}^4\beta_i=\pi$. 
 
 If we are in the case that $\beta_1+\beta_2=\pi/2$ (or any pair $i\neq j$ with $\beta_i+\beta_j=\pi/2$), then
 $\sum_{i=1}^4 \sin^2 \beta_{i}=\sin^2(\beta_1)+\sin^2(\pi/2-\beta_1)+\sin^2(\beta_3)+\sin^2(\pi/2-\beta_3)=2$.
  If we are in the other case that $\beta_1=\beta_2=\beta_3=\beta_4$,  then $\sum_{i=1}^4 \sin^2 \beta_{i}=4\sin^2\frac{\pi}{4}=2.$
  So  for $k=1$,
 $$\sum_{i=1}^4\sin^2\beta_{i1}\leq 4\sin^2\frac{\pi}{4},$$
 and the equality holds when $\beta_{i1}+\beta_{j1}=\pi/2$ for some $i\neq j$.

When $k=2$, it is obvious that 
$$\sum_{i=1}^4\sin^2\beta_{i2}\leq 4=4\sin^2\frac{\pi}{2}$$ 
with equality at $\beta_{i2}=\pi/2$, for all $i\in[1:4]$. This implies that $\beta_{i1}+\beta_{i+1,1}=\pi/2$ for some $i$.
 
 When $k=3$, $\sum_{i=1}^4\sin^2\beta_{i3}=\sum_{i=1}^4\sin^2(\pi-\beta_{i1})=\sum_{i=1}^4\sin^2\beta_{i1}$ which reduces to the $k=1$ case.
 
 In summary, for any $k=1, 2,3$,
 
 $$\sum_{i=1}^4\sin^2\frac{\alpha_{ik}}{2}\leq 4\sin^2\frac{\pi k}{4},$$
and the equality holds simultaneously when $\alpha_{i1}+\alpha_{i+1,1}=\pi$, or equivalently $x_1+x_3=0, x_2+x_4=0$.

Following \eqref{equ:convex}, we have
\begin{align}\label{equ:f2}
\sum_{i\neq j}f\big(\|x_i-x_j\|^2\big)\geq 4\sum_{k=1}^{3}f\Big(\frac{4r^2}{4}\sum_{i=1}^4\sin^2\frac{\alpha_{ik}}{2}\Big)\geq 4\sum_{k=1}^{3}f\Big(4r^2\sin^2\frac{\pi k}{4}\Big)=8f(2r^2)+4f(4r^2).
\end{align}
It is easy to check that four equally spaced points on $C_r$ achieve this minimum. 

If $f$ is strictly convex, then the inequality of \eqref{equ:convex} becomes equality if $\|x_i-x_{i+k}\|=\|x_j-x_{j+k}\|$ for every $i\neq j$, which only holds for equally spaced points.
\end{proof}

\begin{remark}\label{rem:fail}
The proof of Theorem \ref{thm:4} breaks down for $N\geq5$ because $\sum_{i=1}^N \sin^2 \beta_{i1}$ is not  maximized at equally spaced points.
\end{remark}

Our  second result regarding \eqref{equ:f} is a variation of the main result of the work by Cohn and Kumar \cite[Theorem 1.2]{CohKum2007}. Let $m$ be a positive integer. An \emph{$m$-sharp configuration} $X\subset \Sd$ is  a spherical $(2m-1)$-design with $m$ inner products between its distinct points. 
It was proven in \cite{CohKum2007} that sharp configurations are the unique universal optimal configurations of the problem
\begin{equation}\label{equ:fd}
\min_{X\in S(N,d)} \sum_{i\neq j}f(\|x_i-x_j\|^2),
\end{equation}
for completely monotonic functions $f$.
A $C^{\infty}$ function $f: I\rightarrow\R$ is called \emph{$K$-completely monotonic} if $(-1)^kf^{(k)}(x)\geq 0$ for all $x\in I$ and all $k\leq K$, and \emph{strictly $K$-completely monotonic} if strict inequality always holds in the interior of $I$. 
The notion $\infty$-completely monotonic is simply called \emph{completely monotonic} as traditionally defined, which means $(-1)^kf^{(k)}(x)\geq 0$ for all $x\in I$ and all $k\geq0$.
A list of known sharp configurations was given in \cite{CohKum2007}.
For example, $N$ equally spaced points on $\Sb^1$ is an  $\lfloor{N/2}\rfloor$-sharp configuration.

Another notion that we will need is that of  absolutely monotonic functions. A $C^{\infty}$ function $f: I\rightarrow\R$ is called \emph{$K$-absolutely monotonic} if $f^{(k)}(x)\geq 0$ for all $x\in I$ and all $k\leq K$. Similarly, $\infty$-absolutely monotonic means the inequality is true for all nonnegative integers $k$, and will be simply referred to as \emph{absolutely monotonic}. It is straightforward that $f(t)$ being completely monotonic is equivalent to $f(-t)$ being absolutely monotonic. 

As remarked by \cite{CohKum2007}, the complete monotonicity on $f$ can be weakened slightly. To ensure a good flow of the paper,  the proof of the next result  which is a variation of \cite[Theorem 1.2]{CohKum2007} will be given in the appendix.

\begin{theorem}\label{thm:CK}
Fix a positive integer $m$ and let $f:(0,4]\rightarrow\R$ be a function such that $(-1)^kf^{(k)}(t)\geq0$ for all $t\in(0,4], k\leq 2m$. Then an $m$-sharp configuration is an optimal configuration of \eqref{equ:fd}. Furthermore, if $(-1)^kf^{(k)}(t)>0$ for all $t\in(0,4), k\leq 2m$, then the $m$-sharp configuration is the unique optimal configuration of \eqref{equ:fd}.
\end{theorem}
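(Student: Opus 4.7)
The plan is to mirror the linear programming argument of Cohn and Kumar, checking that only the first $2m$ derivative sign conditions on $f$ (rather than all of them) enter the proof. It is convenient to work on the sphere side: define $F(u):=f(2-2u)$ for $u\in[-1,1]$, so that $F^{(k)}(u)=(-2)^k f^{(k)}(2-2u)$, and the hypothesis $(-1)^kf^{(k)}\ge 0$ translates to $F^{(k)}(u)\ge 0$ for $0\le k\le 2m$, with strict inequality on $(-1,1)$ in the uniqueness case. Since $\|x_i-x_j\|^2=2-2\langle x_i,x_j\rangle$ on $\Sd$, the energy in \eqref{equ:fd} becomes $\sum_{i\ne j}F(\langle x_i,x_j\rangle)$.

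The core object is an auxiliary polynomial $H$ of degree at most $2m-1$ with three properties: (a) $H(u)\le F(u)$ on $[-1,1]$; (b) $H$ agrees with $F$ at the $m$ distinct inner products $t_1<\cdots<t_m$ realized between distinct points of a prescribed $m$-sharp configuration $X^*$; and (c) the Gegenbauer (ultraspherical) expansion of $H$ in dimension $d$ has nonnegative coefficients $\hat H_k$ for all $k\ge 1$. I take $H$ to be the Hermite interpolant of $F$ at the nodes $t_1,\ldots,t_m$, prescribing both value and first derivative at each node, giving $2m$ conditions and hence $\deg H\le 2m-1$. Property (b) is built in by construction, and (a) follows from the classical Hermite remainder formula
\[
F(u)-H(u)=\frac{F^{(2m)}(\xi_u)}{(2m)!}\prod_{i=1}^m (u-t_i)^2,
\]
so the single condition $F^{(2m)}\ge 0$ delivers (a). Property (c) is the analogue of the Gegenbauer positivity step in Cohn-Kumar; rather than invoking Bernstein's integral representation of completely monotonic functions, I would adapt their argument by noting that the Hermite interpolant of degree $2m-1$ depends only on the values of $F,F',\ldots,F^{(2m-1)}$ at the nodes together with an error controlled by $F^{(2m)}$, all of which carry the required sign under our weakened hypothesis. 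Verifying (c) in this finite-order regime is the main obstacle, and is where the lower-order conditions $F^{(k)}\ge 0$ for $k<2m$ are used.

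Granting (a)-(c), the conclusion is a standard linear-programming bound. For any $X=\{x_1,\ldots,x_N\}\in S(N,d)$,
\[
\sum_{i\ne j}F(\langle x_i,x_j\rangle)\ge\sum_{i\ne j}H(\langle x_i,x_j\rangle)=-NH(1)+\sum_{i,j}H(\langle x_i,x_j\rangle)\ge -NH(1)+N^2\hat H_0,
\]
where the first inequality uses (a) and the second uses Schoenberg's positive definiteness together with (c). When $X=X^*$, the first inequality is an equality by (b) (every distinct-pair inner product lies in $\{t_1,\ldots,t_m\}$, where $H=F$), and the second is an equality because $X^*$ is a spherical $(2m-1)$-design and $\deg H\le 2m-1$, so the double sum equals $N^2\int_{\Sd}\int_{\Sd}H(\langle x,y\rangle)\,d\sigma(x)d\sigma(y)=N^2\hat H_0$. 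This proves that $X^*$ is optimal. For uniqueness, the strict hypothesis $F^{(k)}>0$ on $(-1,1)$ for $k\le 2m$ upgrades the Hermite remainder to a strict inequality off the interpolation nodes, forcing every distinct-pair inner product of an optimal $X$ to lie in $\{t_1,\ldots,t_m\}$; combined with the equality case of the Gegenbauer bound and the rigidity of sharp configurations established in \cite{CohKum2007}, this identifies $X$ with $X^*$ up to the orthogonal equivalence.
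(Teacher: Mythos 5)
Your setup is the same as the paper's: pass to $a(t)=f(2-2t)$, take the Hermite interpolant $h$ matching $a$ to order $2$ at the $m$ inner products $t_1,\dots,t_m$ (so $\deg h\le 2m-1$), get $h\le a$ from the remainder formula using only $a^{(2m)}\ge 0$, and then run the standard linear-programming bound, which is tight for the sharp configuration because it is a spherical $(2m-1)$-design. All of that is correct and matches the paper. The problem is your property (c), the positive definiteness (nonnegativity of the Gegenbauer coefficients $\hat H_k$, $k\ge 1$) of the interpolant. You explicitly flag it as ``the main obstacle'' and then dispose of it with the remark that the interpolant depends only on $F,F',\dots,F^{(2m-1)}$ at the nodes, ``all of which carry the required sign.'' That is not an argument: nonnegativity of finitely many derivatives of $F$ at the nodes does not in any direct way yield nonnegativity of the Gegenbauer coefficients of the Hermite interpolant, and this implication is precisely the nontrivial content of the theorem once one weakens complete monotonicity to monotonicity of finite order.

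The paper's appendix is devoted entirely to closing this gap. It does so by quantifying Cohn--Kumar's ``conductivity'' induction: Proposition~\ref{pro:2.2} shows that the divided-difference quotient $Q(a,g)=(a-H(a,g))/g$ loses exactly $\deg g$ orders of absolute monotonicity; Lemma~\ref{lem:conductive} shows that multiplying a $K$-conductive, positive definite $g_1$ into a $K$-conductive $g_2$ costs $\deg g_1$ additional orders; and iterating over the linear factors of $F^2=\prod_i(t-t_i)^2$ (using that the partial products $\prod_{i\le j}(t-t_i)$ are strictly positive definite for sharp configurations) shows that $2m$ orders of monotonicity suffice to make $H(a,F^2)$ positive definite. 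None of this bookkeeping appears in your proposal, so the claim that the hypotheses $(-1)^kf^{(k)}\ge 0$ for $k\le 2m$ are \emph{enough} is never actually established. The uniqueness paragraph is acceptable in outline (both you and the paper defer to Cohn--Kumar's Section~6, and the only new point is that the strictness needed there is $a^{(\deg h+1)}>0$ with $\deg h+1\le 2m$), but it too rests on the unproven positive definiteness step.
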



A direct consequence of Theorem \ref{thm:CK} for dimension $d=2$ is that equally spaced points are optimal configurations if the energy kernel function $f$ is completely monotonic up to certain order. But notice that $\sum_{i\neq j}f(\|x_i-x_j\|^2)$ only depends on the relative distances between $x_i$'s so the result should be true for any circle $C_r$ (whose radius is $r$) if we rescale $f$ properly.

\begin{corollary}\label{cor:CK}
For $N\geq 4$, let $m=\lfloor{N/2}\rfloor$. For $r>0$, suppose that  $f: (0,4r^2]\rightarrow \R$ is completely monotonic up to $2m$.  Then $N$ equally spaced points on $C_r$ is an optimal configuration of \eqref{equ:f}. Moreover, if $f$ is strictly completely monotonic up to $2m$, then the equally spaced points is the unique optimal configuration of \eqref{equ:f}.
\end{corollary}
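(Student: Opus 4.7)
The plan is to reduce Corollary \ref{cor:CK} to Theorem \ref{thm:CK} applied in dimension $d=2$, via a rescaling that turns the circle $C_r$ into the unit circle $\Sb^1$. First, since the energy $\sum_{i\neq j}f(\|x_i-x_j\|^2)$ depends only on the pairwise Euclidean distances between the points, and any two circles of the same radius in any ambient Euclidean space are isometric, I may assume without loss of generality that $C_r \subset \R^2$ is centered at the origin, so that $C_r = r\Sb^1$. The natural bijection $X \leftrightarrow \tfrac{1}{r}X$ between $N$-point configurations on $C_r$ and $N$-point configurations on $\Sb^1$ satisfies $\|x_i-x_j\|^2 = r^2\|\tfrac{1}{r}x_i - \tfrac{1}{r}x_j\|^2$, with the latter distance squared lying in $(0,4]$.

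Next, I introduce the rescaled kernel $g:(0,4]\to\R$ defined by $g(s) := f(r^2 s)$. A direct differentiation gives
\[
g^{(k)}(s) = r^{2k} f^{(k)}(r^2 s), \qquad s\in (0,4],
\]
so that $(-1)^k g^{(k)}(s) = r^{2k}(-1)^k f^{(k)}(r^2 s)$. Because $f$ is completely monotonic up to $2m$ on $(0,4r^2]$, it follows that $g$ is completely monotonic up to $2m$ on $(0,4]$. Moreover, strict positivity of $(-1)^k f^{(k)}$ on $(0,4r^2)$ transfers to strict positivity of $(-1)^k g^{(k)}$ on $(0,4)$, so $g$ inherits whichever hypothesis we are working under.

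I then apply Theorem \ref{thm:CK} to the kernel $g$ in dimension $d=2$ with the given $m=\lfloor N/2\rfloor$. As was observed following the definition of sharp configurations, $N$ equally spaced points on $\Sb^1$ form an $\lfloor N/2\rfloor$-sharp configuration. Thus Theorem \ref{thm:CK} yields that these equally spaced points are an optimal configuration of the problem
\[
\min_{Y\in S(N,2)} \sum_{i\neq j} g(\|y_i-y_j\|^2),
\]
and, under the strict hypothesis, the unique such configuration. Undoing the rescaling $X = rY$ and using the identity $g(\|y_i-y_j\|^2) = f(r^2\|y_i-y_j\|^2) = f(\|x_i-x_j\|^2)$, the optimum on $C_r$ is attained exactly by $N$ equally spaced points, with uniqueness in the strict case. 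I do not expect any serious obstacle here: the entire argument is a scaling reduction, and the only point requiring slight care is verifying that complete monotonicity is preserved under the dilation $s\mapsto r^2 s$, which follows from the factor $r^{2k}>0$ appearing in each derivative.
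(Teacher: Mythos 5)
Your proof is correct and follows exactly the route the paper intends: the paper derives Corollary~\ref{cor:CK} from Theorem~\ref{thm:CK} in dimension $d=2$ by noting that the energy depends only on pairwise distances and that $f$ can be rescaled to move from $C_r$ to $\Sb^1$, using the fact that $N$ equally spaced points on $\Sb^1$ form an $\lfloor N/2\rfloor$-sharp configuration. You have simply made explicit the dilation $g(s)=f(r^2s)$ and the preservation of (strict) complete monotonicity under it, which the paper leaves as an implicit remark.
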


\subsection{A lifting trick}\label{subsec3.2}
How do Theorem \ref{thm:4} and Corollary \ref{cor:CK} help to solve the frame potential problem? On the unit circle, we have $|\langle x_i,x_j\rangle|^p=\left|\frac{2-\|x_i-x_j\|^2}{2}\right|^p=h(\|x_i-x_j\|^2),$ where $h(t)=\left|\frac{2-t}{2}\right|^p$. Unfortunately neither result can be applied because the function $h(t)$ is not differentiable at $t=2$ unless $p$ is an even integer; worse, it is not even decreasing on [0,4]. This should not come as a surprise since the frame potential does not distinguish between antipodal points. Consequently, rather than analyzing the frame potential in terms of the distance between vectors, we should consider it in terms of the distance between lines, as was done in \cite{CP16}.

Define $P:\Sd\rightarrow M(d,d)$ as 
$P(x)=xx^*,$ where $M(d,d)$ is the space of $d\times d$ symmetric matrices endowed with the Frobenius norm.
$P(\Sd)$ identifies antipodal points, and is the projective space embedded in $M(d,d)$. 
We write $P(x)$ as $P_x$ and  list some of the properties.
\begin{equation}\label{equ:pr}
\left\{ \begin{array} {l@{\ =\ } l}
\langle P_x, P_y\rangle &|\langle x,y\rangle|^2\\
\|P_x-P_y\|^2&2-2|\langle x,y\rangle|^2
.\end{array}\right.
\end{equation}
When $d=2$, we can explicitly write the embedding as
 $P:\Sb^1\rightarrow M(2,2) (= \R^3),$
$$P(x)=P_x=xx^*=\left[\begin{array}{cc}x_1^2 &x_1x_2 \\ x_1x_2 & x_2^2\end{array}\right]\longleftrightarrow(x_1^2,\sqrt{2}x_1x_2,x_2^2).$$
It is not hard to see that $P(\Sb^1)$ is a circle in $\R^3$ centered at $(\frac{1}{2},0,\frac{1}{2})$ with radius $r=\frac{1}{\sqrt{2}}$,  and this is where we can apply Theorem \ref{thm:4} or Corollary \ref{cor:CK}. One can verify that equally spaced points on the circle $P(\Sb^1)$ are precisely $X_N^{(h)}$, equally spaced points on the half circle, so we have the following theorem.

\begin{theorem}\label{thm:both}
Let $g:[0,1) \rightarrow \R$ and consider 
\begin{equation}\label{equ:gprojective}
\min_{X\in S(N,2)} \sum_{i\neq j}g(|\langle x_i,x_j\rangle|^2),
\end{equation}
Then the following statements hold.
\begin{enumerate}
\item[(a)] If $g$ is convex and increasing, then $X_4^{(h)}$ is an optimal configuration of \eqref{equ:gprojective} when $N=4$. Moreover if $g$ is strictly convex, then $X_4^{(h)}$ is the unique optimal configuration.
\item[(b)] If $g$ is absolutely monotone up to $2\lfloor{N/2}\rfloor$, then $X_N^{(h)}$ is an optimal configuration of \eqref{equ:gprojective}. Moreover if $g$ is strictly absolutely monotone up to $2\lfloor{N/2}\rfloor$, then $X_N^{(h)}$ is the unique optimal configuration.
\end{enumerate}
\end{theorem}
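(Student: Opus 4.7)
The plan is to exploit the lifting trick set up in Section~\ref{subsec3.2} to convert problem~\eqref{equ:gprojective} into a minimum energy problem on the circle $C_r = P(\Sb^1) \subset M(2,2)$ with $r = 1/\sqrt{2}$, and then apply Theorem~\ref{thm:4} for part (a) and Corollary~\ref{cor:CK} for part (b).

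First I would translate the functional. By the second identity in~\eqref{equ:pr}, $|\langle x,y\rangle|^2 = 1 - \tfrac{1}{2}\|P_x - P_y\|^2$. So if I define $f : (0, 2] \to \R$ by $f(t) := g(1 - t/2)$, then for any $X = \{x_i\}_{i=1}^N \in S(N,2)$,
\[
\sum_{i\neq j} g(|\langle x_i, x_j\rangle|^2) = \sum_{i \neq j} f(\|P_{x_i} - P_{x_j}\|^2),
\]
where the right-hand side is a sum of the form appearing in~\eqref{equ:f} over the $N$-point configuration $\{P_{x_i}\}_{i=1}^N$ on the circle $C_r = P(\Sb^1)$ of radius $r = 1/\sqrt{2}$ (noting $4r^2 = 2$ matches the domain of $f$). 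Since $P$ is a two-to-one identification of antipodal points, arbitrary $N$-point configurations on $C_r$ are in bijective correspondence with equivalence classes in $S(N,2)$ modulo sign switches; hence minimizing either side gives the same minimum.

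For part (a), the map $t \mapsto 1 - t/2$ is affine, so $f$ inherits convexity from $g$ and the negative slope flips monotonicity: $g$ convex and increasing $\Rightarrow f$ convex and decreasing on $(0,2]$, with strict convexity of $g$ implying strict convexity of $f$. Theorem~\ref{thm:4} then gives that four equally spaced points on $C_r$ form the (unique, in the strict case) optimal configuration. For part (b), I would differentiate: $f^{(k)}(t) = (-1/2)^k g^{(k)}(1 - t/2)$, so $(-1)^k f^{(k)}(t) = 2^{-k} g^{(k)}(1 - t/2) \geq 0$, i.e.\ $g$ being $2\lfloor N/2\rfloor$-absolutely monotonic translates directly into $f$ being $2\lfloor N/2\rfloor$-completely monotonic, and strict inequalities are preserved. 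Corollary~\ref{cor:CK} then gives that $N$ equally spaced points on $C_r$ form the (unique, in the strict case) optimal configuration.

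Finally, I would verify that equally spaced points on $C_r = P(\Sb^1)$ correspond precisely to $X_N^{(h)}$. Parametrizing $x(\theta) = (\cos\theta,\sin\theta)$, the explicit embedding $x \mapsto (x_1^2, \sqrt{2}x_1 x_2, x_2^2)$ depends on $\theta$ only through $2\theta$; so the doubling $\theta \mapsto 2\theta$ shows that $N$ equally spaced $\theta_k = (k-1)\pi/N$ on the half-circle correspond to $N$ equally spaced points $2\theta_k$ on $C_r$. The equivalence relation on $S(N,2)$ (permutations, sign switches, orthogonal transformations) matches exactly with permutations and isometries of $C_r$ (the latter since $P_{Ux} = U P_x U^*$ acts as a rotation/reflection of $C_r$), so the uniqueness statements translate cleanly. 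The main subtlety I anticipate is bookkeeping the two-to-one nature of $P$ together with the equivalence classes so that ``unique up to equivalence'' on $S(N,2)$ is correctly inferred from ``unique up to isometry'' on $C_r$; this is where I would be most careful, but I do not expect it to present a real obstacle given the explicit parametrization above.
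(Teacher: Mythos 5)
Your proposal is correct and follows essentially the same route as the paper: both use the projective lifting $x\mapsto xx^*$ to identify $P(\Sb^1)$ with a circle of radius $1/\sqrt{2}$, set $f(t)=g(1-t/2)$, and then invoke Theorem~\ref{thm:4} for part (a) and Corollary~\ref{cor:CK} for part (b). Your extra care with the angle-doubling parametrization and the matching of equivalence classes is a welcome elaboration of details the paper leaves implicit, but it is not a different argument.
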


\begin{proof}
As defined, $P_{x_i}=x_ix_i^*$. Denote $P_{x_i}$ by $P_i$ for simplicity. By \eqref{equ:pr},
$$g(|\langle x_i,x_j\rangle|^2)=g(1-\|P_i-P_j\|^2/2)=:f(\|P_i-P_j\|^2),$$
where $f(t)=g(1-t/2)$ is defined on $(0,2]$. As discussed earlier, view the points $P_i$  on a circle in $\R^3$ with radius $1/\sqrt{2}$, so solving \eqref{equ:gprojective} is equivalent to solving \eqref{equ:f} with $r=1/\sqrt{2}$.

If $g$ is convex and increasing, then $f$ is convex and decreasing. Applying Theorem \ref{thm:4} gives equally spaced $P_i$, which is equally spaced points on the half circle. This is part (a).

If $g$ is absolutely monotone up to $2\lfloor{N/2}\rfloor$, then $f$ is completely monotone up to $2\lfloor{N/2}\rfloor$. Applying Corollary \ref{cor:CK} gives part (b).

\end{proof}

\begin{remark}
Observe that in Theorem \ref{thm:both}, the assumption of (b) is much stronger than (a). If $g$ is twice differentiable, then $g$ being convex and decreasing is equivalent to $g$ being absolutely monotone up to 2. Furthermore, Theorem \ref{thm:both} is a very general result that goes beyond frame potentials. Indeed, it cover the cases where the energy can be expressed as a function of squares of the inner products. We expect to pursue this line of investigations elsewhere, with the goal of analyzing other energy kernels  suitable for finding certain well conditioned frames.
\end{remark}

Finally we are ready to state the promised frame potential result as a special case of Theorem \ref{thm:both}.

\begin{theorem}\label{thm:p}
Let  $X^{(h)}_N$ be the equally spaced points on  half of the circle $\Sb^1$ as in \eqref{equ:half}. The  following statements hold.
\begin{enumerate}
\item[(a)] If $N=4$ and $p>2$, then $X^{(h)}_4$ is the unique optimal configuration of \eqref{equ:min}.
\item[(b)] If $N\geq5$ and $p>\left\{\begin{array}{ll}
2N-2, &\text{$N$ is even}\\
2N-4, &\text{$N$ is odd}
\end{array}\right.$, then $X^{(h)}_N$ is the unique optimal configuration of \eqref{equ:min}.
\item[(c)] If $N\geq 5$,  and $2<p\leq\left\{\begin{array}{ll}
2N-2, &\text{$N$ is even}\\
2N-4, &\text{$N$ is odd}
\end{array}\right.$ is an even integer, then $X^{(h)}_N$ is an optimal configuration of \eqref{equ:min}, but it is unclear whether there are other optimal configurations.
\end{enumerate}
\end{theorem}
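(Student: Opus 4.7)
The plan is to reduce all three parts of Theorem \ref{thm:p} to Theorem \ref{thm:both} by taking the kernel
\[
g(t) = t^{p/2}, \qquad t \in [0,1),
\]
so that $g(|\langle x_i, x_j\rangle|^2) = |\langle x_i, x_j\rangle|^p$ and hence $\sum_{i \neq j} g(|\langle x_i, x_j\rangle|^2) = \fp_{p,N,2}(X)$. The optimal configurations of \eqref{equ:min} in dimension two therefore coincide with those of \eqref{equ:gprojective} for this particular $g$, and each of the three parts becomes a matter of verifying the correct regularity hypothesis on $g$.

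For part (a), when $p > 2$ one has
\[
g'(t) = \tfrac{p}{2}\, t^{p/2-1} > 0, \qquad g''(t) = \tfrac{p}{2}\bigl(\tfrac{p}{2} - 1\bigr) t^{p/2-2} > 0
\]
on $(0,1)$, so $g$ is strictly convex and increasing; Theorem \ref{thm:both}(a) with $N = 4$ immediately yields uniqueness of $X_4^{(h)}$.

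For parts (b) and (c), I would set $K := 2\lfloor N/2 \rfloor$, so that the threshold $4\lfloor N/2\rfloor - 2$ appearing implicitly in the statement is exactly $2K - 2$, and use the identity
\[
g^{(k)}(t) = \tfrac{p}{2}\bigl(\tfrac{p}{2} - 1\bigr)\cdots\bigl(\tfrac{p}{2} - k + 1\bigr)\, t^{p/2 - k}.
\]
In part (b), the hypothesis $p > 2K - 2$ gives $p/2 > K - 1 \geq k - 1$ for every $k \leq K$, so whether or not $p/2$ is an integer every factor in this product is strictly positive on $(0,1)$; hence $g$ is strictly absolutely monotone up to order $K$, and Theorem \ref{thm:both}(b) produces the uniqueness of $X_N^{(h)}$. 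In part (c), $p = 2m$ is an even integer with $2 \leq m \leq K - 1$, so $g(t) = t^m$ is a polynomial with $g^{(k)}(t) \geq 0$ for $k \leq m$ and $g^{(k)} \equiv 0$ for $k > m$; in particular $g$ is absolutely monotone of every order, and Theorem \ref{thm:both}(b) still forces $X_N^{(h)}$ to be optimal. However the $(m+1)$-st derivative vanishes identically, so the strict hypothesis of Theorem \ref{thm:both}(b) fails and the present argument cannot rule out additional optima, precisely as the statement reports.

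The only non-routine point in this plan is the identification of the threshold $4\lfloor N/2 \rfloor - 2$ in part (b). If $p/2$ lies strictly between two integers $j$ and $j+1$ with $j < K - 1$, then the factors $p/2 - (j+1),\, \ldots,\, p/2 - (k-1)$ in $g^{(k)}(t)$ are negative, so the sign of $g^{(k)}$ oscillates and absolute monotonicity up to order $K$ genuinely breaks down; pushing $p/2$ past $K - 1$ is the only way to restore positivity of all factors simultaneously. Thus the exponent threshold in the theorem is forced by the structure of the derivatives of the power function rather than by any slack in the Cohn--Kumar machinery underlying Theorem \ref{thm:both}.
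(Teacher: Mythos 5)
Your proposal is correct and follows essentially the same route as the paper: both reduce to Theorem \ref{thm:both} via $g_p(t)=t^{p/2}$, use strict convexity for $N=4$, and check (strict) absolute monotonicity of $g_p$ up to order $2\lfloor N/2\rfloor$ for parts (b) and (c), arriving at the identical threshold $p>4\lfloor N/2\rfloor-2$. Your explicit factor-by-factor analysis of $g_p^{(k)}$ is just a more detailed rendering of the paper's observation that $g_p$ is strictly absolutely monotone up to $\lceil p/2\rceil$.
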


\begin{proof}

The $p$-frame potential kernel $|\langle x_i,x_j\rangle|^p=g_p(|\langle x_i,x_j\rangle|^2)$ with $g_p(t)=t^{p/2}$.
 The function $g_p$ is strictly convex and increasing on [0,1) if $p>2$.

\begin{enumerate}
\item[(a)] This part is  due to Theorem \ref{thm:both}(a).
\item[(b)] We notice that $g_p$ is strictly absolutely monotone up to $\lceil p/2\rceil$, where $\lceil c\rceil$ is the smallest integer that is no less than $c$. In order to apply Theorem \ref{thm:both}(b), we require $\lceil p/2\rceil\geq 2\lfloor{N/2}\rfloor$, which is equivalent to $p>2N-2$ if $N$ is even and $p>2N-4$ if $N$ is odd.
\item[(c)]  Finally, this part is true because $g_p$ is absolutely monotone when $p$ is an even integer.
\end{enumerate}
 \end{proof}

\begin{remark}
By Proposition \ref{prop:asymp}, we can let $p$ go to infinity in Theorem \ref{thm:p} and get that $X_N^{(h)}$ is the Grassmannian frame, as was shown in \cite{BK06}.

As seen, the 1-dimensional projective space is isomorphic to a circle. It is well known that higher a dimensional projective space is not a higher dimensional sphere. This is  why the main result Theorem \ref{thm:both} is limited to $d=2$.
\end{remark}

At this point, we summarize the $p$-frame potential results in $\Sb^1$ as the following remark.

\begin{remark}\label{rem:N}Let $d=2$.\textcolor{white}{l}
\begin{enumerate}
\item[(a)] When $N=4$ we have completed the characterization of $\Fc_{p,4,2}$. 
\item[(b)] When $N\geq6$ is even, then \cite[Theorem 3.5]{EhlOko2012} and parts (b) and (c) of Theorem~\ref{thm:p} give the value of $\Fc_{p,N, 2}$ when $p\in (0, 2] \cup\{4, 6, \cdots, 2N-2\}\cup (2N-2, \infty)$. We further know that the minimizer is unique for $p\in (0, 2) \cup (2N-2, \infty)$. It is still open for $p\in (2, 2N-2]$ though we expect $X^{(h)}_N$ to  be a minimizer. The numerical result is displayed in Figure \ref{fig:56} for $N=6$.

\item[(c)]  When $N\geq5$ is odd, we know $\Fc_{p, N, 2}$ for $p\in \{2,4,\cdots, 2N-4\}\cup (2N-4, \infty)$. We suspect that for $p\in (2, 2N-4]$, $X^{(h)}_N$ will still be the minimizer. The case $p\in (0, 2)$ seems rather intriguing as demonstrated in Figure \ref{fig:56} for $N=5$.  
\end{enumerate}
\end{remark}

Figure \ref{fig:56} displays the numerical experiment for $d=2$ and $N=5,6$. According to the numerical experiment, $\Fc_{p,6,2}$ is achieved by $X_6^{(h)}$ for $p\in(2,10]$. The $N=5$ case is more complex. It appears that  for $p$ from 0 to about 1.78, the optimal configuration is two copies of ONB plus a repeated vector; for $p\in(1.78,2)$, the optimal configuration has the structure $\{x,x,y,y,z\}$ whose angles vary as $p$ changes; for $p\in(2,6)$, the optimal configuration is $X_5^{(h)}$
\begin{figure}[hbt]
\centering
\includegraphics[width=0.95\textwidth]{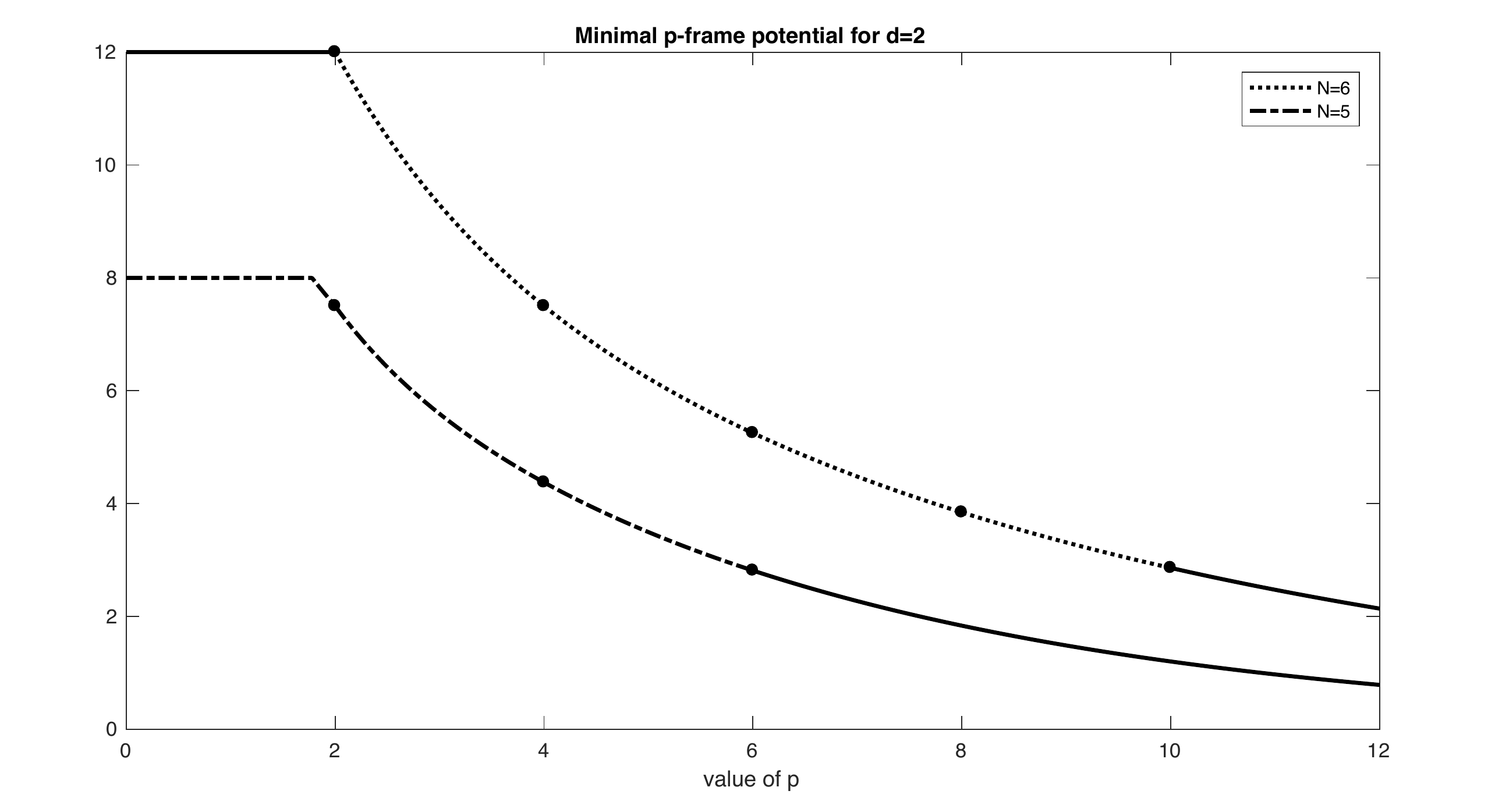}
\caption{The top curve represents $\Fc_{p,6,2}$ while the lower one represents $\Fc_{p,5,2}$. The solid portion  indicates proven cases as commented in Remark \ref{rem:N}. }\label{fig:56}
\end{figure}

\section{Special case of $N=d+1$ points in dimension $d$. }\label{sec:d+1}

In this last section, we report on some numerical experiments and the resulting conjectures when minimizing the $p$-frame potential with $N=d+1$ vectors in $\R^d$,  and $p\in (0, \infty)$. Observe that the case $p = 2$ is a special case of the work by Benedetto and Fickus~\cite{BenFic2003}. Additionally, the case $p > 2$ is handled by Ehler and Okoudjou~\cite[Proposition~3.1]{EhlOko2012}, for which the simplex is the optimal configuration. To be specific, the simplex is an ETF of $d+1$ vectors for $\R^d$. Therefore, the focus in this section are values $p < 2$. The following definition will be used through the rest of this section.

\begin{definition} \label{def:onbp}\textcolor{white}{l}
\begin{enumerate}
\item[(a)] $X$ is an ONB+ if $X$ is formed by  an orthonormal basis of $\R^d$ with one of the vectors repeated.
\item[(b)] Given $n\geq2$,  the simplex of $\R^n$ is denoted by ETF$_n$. An explicit construction of ETF$_n$ is to project $e_1, e_2, \cdots, e_n, e_{n+1}$, the canonical basis of $\R^{n+1}$, onto the orthogonal complement of $\sum_{i=1}^{n+1}e_i$.
\end{enumerate}

\end{definition}


\subsection{Lifted ETFs}
From numerical tests, we have noticed that minimizers for $\Fc_{p, d+1,d}$ take forms similar to ETFs. In particular, they take the form of ETFs that have been lifted to higher dimensions.
\begin{definition}
  \label{defn:L-frames}
  For $1 \le k \le d$,  the frame
  \begin{align*}
    \text{L}_k^d = \begin{bmatrix}
      \text{ETF}_k & 0 \\
      0 & I_{d-k}
    \end{bmatrix}
    = \begin{bmatrix}
      \text{ETF}_k & 0 & \cdots & 0 \\
      0 & 1 & \cdots & 0 \\
      \vdots & \vdots & \ddots & \vdots \\
      0 & 0 & \cdots & 1
    \end{bmatrix}\in S(d+1,d)
  \end{align*}
  is called a \emph{lifted ETF}.
\end{definition}

\begin{remark}
  \label{remark:after-defn-of-L-frames}\textcolor{white}{l}
  \begin{enumerate}
    \item[(a)] In Definition~\ref{defn:L-frames}, the entry ETF$_k$ is the synthesis operator for the ETF$_k$ configuration, and $I_{d-k}$ is the $(d-k) \times (d-k)$ identity matrix. These frames are lifted in the sense that unit vectors for the remaining dimensions ($e_{k+1},  e_{k+2}, \dotsc$, and $e_d$) have been added such that the ETF$_k$ frame is moved from $\R^k$ to $\R^d$. We refer to \cite{trem2009, tremain_II} for more on constructions of these classes of ETFs.

    \item[(b)] The L$_k^d$ frames are not tight, except for the case $k = d$, and we have L$_d^d$ is ETF$_d$.

    \item[(c)] In addition to considering ETF$_d$ as an L$_d^d$ configuration,  ONB+ is the L$_1^d$ frame.
  \end{enumerate}
\end{remark}

\begin{example}
  \label{example:MBz}
  The ETF in $\R^2$ can be lifted to $\R^3$ as
  \begin{figure}[h]
    \begin{tikzpicture}
      \pgfmathsetmacro{\scale}{1.5}
      \draw[->] (0.000000, 0.000000) -- (-0.989949*\scale, -0.329983*\scale);
      \draw[->] (0.000000, 0.000000) -- (0.989949*\scale, -0.329983*\scale);
      \draw[->] (0.000000, 0.000000) -- (0.000000*\scale, 1.319933*\scale);
      \draw[->, ultra thick] (0.000000*\scale, 0.000000*\scale) -- (-0.707107*\scale, -0.235702*\scale);
      \draw[->, ultra thick] (0.000000*\scale, 0.000000*\scale) -- (0.965926*\scale, -0.086273*\scale);
      \draw[->, ultra thick] (0.000000*\scale, 0.000000*\scale) -- (-0.258819*\scale, 0.321975*\scale);
      \draw[->, ultra thick] (0.000000*\scale, 0.000000*\scale) -- (0.000000*\scale, 0.942809*\scale);
    \end{tikzpicture}
    \caption{The L$_2^3$ frame}
  \end{figure}
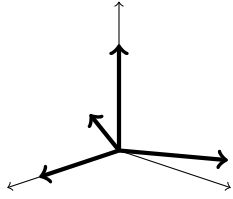
  \begin{align*}
    \text{L}_2^3 =
    \begin{bmatrix}
      1 & -1/2 & -1/2 & 0 \\
      0 & \sqrt{3}/2 & -\sqrt{3}/2 & 0 \\
      0 & 0 & 0 & 1
    \end{bmatrix}.
  \end{align*}
  We see that this frame is neither tight nor equiangular by computing the frame operator and Grammian,
  \begin{align*}
    S = \begin{bmatrix}
      3/2 & 0 & 0 \\
      0 & 3/2 & 0 \\
      0 & 0 & 1
    \end{bmatrix}
    \qquad
    G = \begin{bmatrix}
      1 & -1/2 & -1/2 & 0 \\
      -1/2 & 1 & -1/2 & 0 \\
      -1/2 & -1/2 & 1 & 0 \\
      0 & 0 & 0 & 1
    \end{bmatrix}.
  \end{align*}
\end{example}

More generally, the Grammian of the L$_k^d$ frame is
  \begin{align}\label{equ:grammian}
    \begin{bmatrix}
      1 & -1/k & -1/k & \cdots & -1/k & 0 & 0 & \cdots & 0 \\
      -1/k & 1 & -1/k & \cdots & -1/k & 0 & 0 & \cdots & 0 \\
      -1/k & -1/k & 1 & \cdots & -1/k & 0 & 0 & \cdots & 0 \\
      \vdots & \vdots & \vdots & \ddots & \vdots & \vdots & \vdots & \ddots & \vdots \\
      -1/k & -1/k & -1/k & \cdots & 1 & 0 & 0 & \cdots & 0 \\
      0 & 0 & 0 & \cdots & 0 & 1 & 0 & \cdots & 0 \\
      0 & 0 & 0 & \cdots & 0 & 0 & 1 & \cdots & 0 \\
      \vdots & \vdots & \vdots & \ddots & \vdots & \vdots & \vdots & \ddots & \vdots \\
      0 & 0 & 0 & \cdots & 0 & 0 & 0 & \cdots & 1 \\
    \end{bmatrix}
  \end{align}
  indicating that each L$_k^d$ frame is a two-distance set (see, \cite{Delsarte:1977aa, barg_etal}) with inner products~$-1 / k$ and~$0$; note, however, that L$_d^d$, or the ETF$_d$ configuration, will have only one inner product,~$-1 / d$.


\subsection{Lifted ETFs as the Conjectured Minimizers}
Numerical computations suggest that the L$_k^d$ frames are minimizers of $\fp_{p, d+1, d}$. 
\begin{conjecture}\label{conj}
  \label{conjecture:L-frames-are-minimizers} Suppose $d\geq 2$ and for every natural number $1\leq k \leq d-1$, let
  \begin{align*}
    p_k = \frac{\log(k+2) - \log k}{\log(k+1) - \log k}.
  \end{align*}
  We also define $p_0=0$. The following configurations minimize the $p$-frame potential $FP_{p, d+1, d}$:
  \begin{itemize}
    \item when $p \in (p_{k-1}, p_k]$, the L$_k^d$ configuration, $k=1, 2, \cdots, d-1$;
    \item when $p \in (p_{d-1}, \infty]$, the ETF$_d$, or L$_d^d$ configuration.
  \end{itemize}
\end{conjecture}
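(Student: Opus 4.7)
The plan is to prove the conjecture by induction on $d$, combining a dimension-reduction argument based on the null vector of the Gram matrix with a lower bound for configurations whose null vector has full support. Writing $\Phi_p(d):=\Fc_{p,d+1,d}$, a direct computation from~\eqref{equ:grammian} gives $\fp_p(L_k^d)=(k+1)\,k^{1-p}$; equating this value for consecutive $k$ immediately yields the stated thresholds $p_k$, so the piecewise-optimal $k$ within the $L$-family is determined. The goal is to show that no configuration outside the $L$-family can beat the best $L_k^d$, i.e.\ $\Phi_p(d)=\min_{1\le k\le d}(k+1)\,k^{1-p}$.

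First I would establish the key dimension-reduction step. Any $X\in S(d+1,d)$ that spans $\R^d$ has Gram matrix of rank exactly $d$, hence a one-dimensional null space spanned by some $v=(v_1,\dots,v_{d+1})$. Let $T=\{i:v_i\neq 0\}$ (size $r$) and $S=\{i:v_i=0\}$ (size $s=d+1-r$). Uniqueness of the null vector forces $\{x_i\}_{i\in S}$ to be linearly independent and $\Span\{x_i\}_{i\in S}\cap\Span\{x_i\}_{i\in T}=\{0\}$, so $\dim\Span\{x_i\}_{i\in T}=r-1$, $\dim\Span\{x_i\}_{i\in S}=s$, and together these spans present $\R^d$ as a (non-orthogonal) direct sum. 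When $s\ge 1$, I would then argue that replacing $\{x_i\}_{i\in S}$ by an orthonormal basis of the orthogonal complement $V^\perp$ of $V:=\Span\{x_i\}_{i\in T}$ preserves the null-vector support pattern, kills both the cross inner products and the within-$S$ inner products, and can therefore only decrease $\fp_p(X)$. The upshot is $\fp_p(X)\ge \fp_p(X|_T)$, where $X|_T$ is a unit-norm frame of $r=d+1-s$ vectors in $\R^{r-1}$ whose null vector now has full support.

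Combining this with the inductive hypothesis $\Phi_p(d')=\min_{k\le d'}(k+1)k^{1-p}$ for $d'<d$ yields
\[
\Phi_p(d)\;=\;\min\bigl\{\,\Phi_p^{(0)}(d),\ \min_{1\le k\le d-1}(k+1)\,k^{1-p}\bigr\},
\]
where $\Phi_p^{(0)}(d)$ is the infimum of the $p$-frame potential over configurations in $\R^d$ whose null vector has full support. The conjecture therefore reduces to the single inequality
\[
\Phi_p^{(0)}(d)\;\ge\;\min_{1\le k\le d}(k+1)\,k^{1-p}.\tag{$\ast$}
\]
The $L_k^d$ form of the minimizer then emerges by tracking equality cases through the inductive reduction.

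The expected main obstacle is~$(\ast)$, particularly in the regime $p<2$. The naive strengthening $\Phi_p^{(0)}(d)\ge (d+1)d^{1-p}$ (that is, $\text{ETF}_d$ optimal among full-support configurations) is \emph{false}: in $\R^2$ at $p=1$, the configuration $\{e_1,e_2,(e_1+e_2)/\sqrt 2\}$ has full-support null vector and $\fp_1=2\sqrt 2<3=\fp_1(\text{ETF}_2)$; what~$(\ast)$ actually predicts here is only $2\sqrt 2\ge 2$. For $p>2$, Ehler--Okoudjou and Cohn--Kumar handle~$(\ast)$ via the complete monotonicity of the lifted kernel $g(s)=(1-s/2)^{p/2}$, but for $p<2$ this same $g$ is \emph{concave} on $(0,2)$ and the framework breaks down. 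Two candidate routes for~$(\ast)$ in the hard regime are: (i) linear-programming / Delsarte-type bounds, expanding $|t|^p$ in the Jacobi basis adapted to $\R^d$ and coupling the coefficients to the moment identities $\|\sum_i v_i x_i\|^{2m}=0$, which relate $|v_i|$-weighted sums of powers of $\langle x_i,x_j\rangle$ to lower-order quantities; and (ii) a critical-point / Morse analysis via the Lagrange condition $\sum_{j\neq i}|\langle x_i,x_j\rangle|^{p-2}\langle x_i,x_j\rangle x_j=\lambda_i x_i$, which is satisfied at each $L_k^d$ with $\lambda_i=2p\,k^{1-p}$ on the $\text{ETF}_k$ block and $\lambda_i=0$ on the identity block. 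The latter approach is already tractable for $d=2$, where the rank-$d$ Gram variety is the single cubic $a^2+b^2+c^2-2abc=1$ and the minimization of $2(|a|^p+|b|^p+|c|^p)$ can be handled directly, and where the limit $\beta\to\pi$ of the family $\{(1,0),(0,1),(\cos\beta,\sin\beta)\}$ shows $(\ast)$ is sharp precisely at $L_1^2$.
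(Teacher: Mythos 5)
First, a point of order: the paper does not prove this statement --- it is stated as Conjecture~\ref{conj} and is supported only by the numerical experiments of Section~\ref{section:numerical-info} together with the cited partial results (the case $d=2$ from \cite{EhlOko2012}, $p=2$ from \cite{BenFic2003}, $p>2$ and $p=\infty$ from \cite{EhlOko2012, HeaStr2003}, and Glazyrin's theorem that L$_1^d$ is optimal for $p\in(1,2(\frac{\ln 3}{\ln 2}-1)]$). So there is no proof in the paper to compare yours against, and it would be remarkable if a complete blind proof existed. Your computation $\fp_p(\mathrm{L}_k^d)=(k+1)k^{1-p}$ and the derivation of the switching points $p_k$ agree exactly with the paper. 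Your dimension-reduction step is, as far as I can check, sound: the null space of the Gram matrix of $d+1$ spanning unit vectors is one-dimensional, the vectors indexed by the zero set $S$ of the null vector are linearly independent with span meeting $\Span\{x_i\}_{i\in T}$ trivially, and replacing them by an orthonormal basis of $V^\perp$ annihilates all cross and within-$S$ terms without touching the within-$T$ terms, so $\fp_p(X)\ge\fp_p(X|_T)$ with $X|_T$ a configuration of $r$ vectors spanning an $(r-1)$-dimensional space. Combined with induction this correctly reduces the conjecture to your inequality $(\ast)$ for full-support configurations, and your observation that the naive strengthening of $(\ast)$ (ETF$_d$ optimal among full-support configurations) fails at $p=1$, $d=2$ is correct and worth recording.

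The genuine gap is that $(\ast)$ is precisely where the entire difficulty of the conjecture lives, and you do not prove it --- you name it as ``the expected main obstacle'' and offer two candidate routes without carrying either out. For $p>2$ the conjecture is already known (the simplex wins), and for $p\le 2(\frac{\ln 3}{\ln 2}-1)$ Glazyrin's result applies, so the open content is exactly the window $p\in(2(\frac{\ln 3}{\ln 2}-1),2)$, where, as you note, the lifted kernel $g(s)=(1-s/2)^{p/2}$ is concave and the Cohn--Kumar linear-programming machinery used elsewhere in this paper (Theorem~\ref{thm:CK}) breaks down. Neither of your proposed routes is developed enough to assess: the Delsarte-type expansion of $|t|^p$ has negative Jacobi coefficients in this regime (which is why the LP bound fails for ETFs here), and the critical-point analysis would require classifying all critical points of a non-smooth functional on the full-support stratum, not just verifying that the L$_k^d$ are critical. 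A secondary, smaller gap: even granting $(\ast)$ as an inequality on values, concluding that the minimizers are the L$_k^d$ configurations requires tracking when equality propagates through the reduction (in particular, that the within-$T$ configuration must itself be an inductive minimizer and that the $S$-block must be orthonormal and orthogonal to $V$), which you defer with ``emerges by tracking equality cases.'' In short: a useful structural reduction, but not a proof; the conjecture remains open after your argument exactly where it was open before it.
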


Figure \ref{fig:d+1} visualizes this conjecture. Note that part of this conjecture is already proven.
The case $d=2$ is completely established in~\cite{EhlOko2012}. 
For $d\geq3$, the statement that ETF$_d$ is the minimizer  follows from~\cite{BenFic2003} when $p = 2$,  from~\cite[Proposition~3.1]{EhlOko2012} when $p>2$, and from \cite{{HeaStr2003}} for $p=\infty$. A.~Glazyrin~\cite{Glaz19}  recently established that the ONB+, or L$_1^d$ is the optimal configuration for $p\in(1,2(\frac{\ln3}{\ln2}-1)]$, leading to the fact that $\Fc_{p, d+1, d}=2$ for all $p$ in this range and all $d\geq 2$. The number $2(\frac{\ln3}{\ln2}-1)$ is approximately 1.17 and is less than $p_1$.  Conjecture~\ref{conj} was tested numerically for $d = 3, 4, 5, 6, 7$; details are given in Section~\ref{section:numerical-info}.

\begin{figure}[hbt]
\centering
\begin{tikzpicture}[scale=1.6]
\draw[->, thick] (-2.5,0)--(-2.5,2.4) node[left]{$\Fc_{p, d+1,d}$};
 \draw[gray]  (-2.5,2)--(-2.57,2) node[left, black] {2};
\draw[->, thick](-2.5,0)--(6.5,0) node[below]{$p$};
\draw (-2.5,0) node[below] {0};
\draw[gray] (0.7+0.5,0)--(0.7+0.5,-0.07) node[below, black]{1.5};
\draw[gray] (1.7+0.5,0)--(1.7+0.5,-0.07) node[below, black]{1.6};
\draw[gray] (2.7+0.5,0.)--(2.7+0.5,-0.07) node[below, black]{1.7};
\draw[gray] (3.7+0.5,0)--(3.7+0.5,-0.07) node[below, black]{1.8};
\draw[gray] (4.7+0.5,0)--(4.7+0.5,-0.07) node[below, black]{1.9};
\draw[gray] (5.7+0.5,0)--(5.7+0.5,-0.07) node[below, black]{2};
\draw[red, ultra thick] (-1.3,2)--({1.2+10*(ln(3)/ln(2)-1.5)},2);
\draw[red, ultra thick] (-2.5,2)--(-2,2);
\draw[black, ultra thick] (-2,2)--({1.2+10*(1.25-1.5)},2);
\draw[dashed](-1.3,2)--(-1.3,0) node[below]{$2(\frac{\ln3}{\ln2}-1)$};
\draw[red] (-0,2) node[above]{$\rm{L}_1^d$};
\draw[dashed]({1.2+10*(ln(3)/ln(2)-1.5)},2)--({1.2+10*(ln(3)/ln(2)-1.5)},0) node[above]{$p_1$};
\draw [orange,ultra thick]  plot [domain={1.2+10*(ln(3)/ln(2)-1.5)}:{2.2+10*(ln(2)/ln(1.5)-1.6)}] (\x,{6*0.5^(1.5+0.1*(\x-1.2))});
\draw[orange] (2.7,1.9) node[above]{$\rm{L}_2^d$};
\draw[dashed]({2.2+10*(ln(2)/ln(1.5)-1.6)},1.82)--({2.2+10*(ln(2)/ln(1.5)-1.6)},0) node[above ]{$p_2$};
\draw [darkgreen,ultra thick]  plot [domain={2.2+10*(ln(2)/ln(1.5)-1.6)}:{1.2+10*(ln(5/3)/ln(4/3)-1.5)}] (\x,{12*0.333^(1.5+0.1*(\x-1.2))});
\draw[darkgreen] (3.65,1.75) node[above]{$\rm{L}_3^d$};
\draw[dashed]({1.2+10*(ln(5/3)/ln(4/3)-1.5)},1.7)--({1.2+10*(ln(5/3)/ln(4/3)-1.5)},0) node[above ]{$p_3$};
\draw [blue,ultra thick]  plot [domain={1.2+10*(ln(5/3)/ln(4/3)-1.5)}:{1.2+10*(ln(6/4)/ln(5/4)-1.5)}] (\x,{12*0.333^(1.5+0.1*(\x-1.2))});
\draw[blue] (4.17,1.65) node[above]{$\rm{L}_4^d$};
\draw[dashed]({1.2+10*(ln(6/4)/ln(5/4)-1.5)},1.63)--({1.2+10*(ln(6/4)/ln(5/4)-1.5)},0) node[above]{$p_4$};
\draw[dashed, ultra thick, yellow] ({1.2+10*(ln(6/4)/ln(5/4)-1.5)},1.63)--({2.2+10*(ln(12/10)/ln(11/10)-1.6)},1.45);
\draw [purple,ultra thick]  plot [domain={2.2+10*(ln(12/10)/ln(11/10)-1.6)}:{6.2}] (\x,{12*0.333^(1.5+0.1*(\x-1.2))});
\draw [black,ultra thick]  plot [domain={6.2)}:{6.6}] (\x,{12*0.333^(1.5+0.1*(\x-1.2))});
\draw[dashed]({2.2+10*(ln(12/10)/ln(11/10)-1.6)},1.45)--({2.2+10*(ln(12/10)/ln(11/10)-1.6)},0) node[above]{$p_{d-1}$};
\draw[purple] (6,1.37) node[above]{$\rm{L}_d^d/\rm{ETF}_d$};
\end{tikzpicture}
\caption{Conjectured optimal configurations for $\mathcal{F}_{p,d+1,d}$ as $p$ increases from 0 to $\infty$. The black lines are cases already proven.}\label{fig:d+1}
\end{figure}
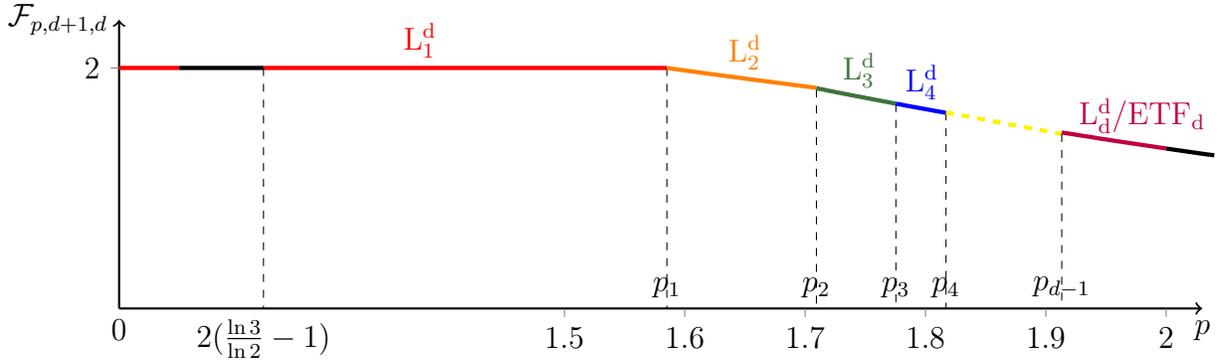


The values $p_k$ may be found by using the $p$-frame potentials of the L$_k^d$ frames. By \eqref{equ:grammian},
\begin{align*}
  \fp_{p, d+1, d}(\text{L}_k^d) = \left((k+1)^2 - (k+1)\right)\left(\frac{1}{k}\right)^p =  (k+1) k \left(\frac{1}{k}\right)^p.
\end{align*}
We find $p_k$ so that the $p$-frame potentials of L$_k^d$ and L$_{k+1}^d$ are equal at the value $p_k$, so
$$(k + 1) k \left( \frac{1}{k} \right)^{p_k} = (k + 2) (k + 1) \left( \frac{1}{k+1} \right)^{p_k} $$  leads to 
  $p_k = \frac{\log(k+2) - \log k}{\log(k+1) - \log k}.$
\begin{remark}\textcolor{white}{l}
  \begin{enumerate}
    \item[(a)] The value $p_k$, where the $p$-frame potential of the L$_{k+1}^d$ frame drops below the $p$-frame potential of the L$_k^d$ frame, does not depend on $d$, the overall dimension.
    \item[(b)] Following Conjecture~\ref{conjecture:L-frames-are-minimizers}, we will call the values $p_k$ are the \emph{switching points} as these are the values of $p$ where the minimizing configuration seems to change. The final switching point is approaching to $2$:
    \begin{align*}
      \lim_{d \to \infty} p_{d-1}
      &= \lim_{d \to \infty} \frac{\log\left( \frac{d+1}{d-1} \right)}{\log\left( \frac{d}{d-1} \right)} =2.
  \end{align*}
  \end{enumerate}
\end{remark}

\subsection{Description of the Numerical Computations}
\label{section:numerical-info}
Here we describe the program we wrote to test Conjecture~\ref{conjecture:L-frames-are-minimizers} numerically. The program was written using Sage \cite{sagemath} and tested this conjecture for $d+1$ vectors in $\R^d$ with $d = 3, 4, 5, 6, 7$, and with $k = 1, 2, \dotsc, d$. Given a frame L$_k^d$ to test, the program proceeds as follows.
\begin{enumerate}
  \item Determine the interval $[p_{k-1}, p_k]$ to test. The following were special cases:
  \begin{itemize}
    \item In the case of ONB+, the program tested the interval $[0.1, p_1],$ since minimization near $p = 0$ is difficult for the algorithm; furthermore, knowing that ONB+ is the minimizer at $p = 0.1$ is enough to know that it is the minimizer for all $p \in [0, 0.1]$.
    \item In the case of ETF$_d$, the program tested $[p_{d-1}, 2]$.
  \end{itemize}
  In the remainder, we refer to the lower and upper bounds of the tested interval as $p_{\text{min}}$ and $p_{\text{max}}$.
  
  \item Create a variable to count the number of frames with a lower $p$-frame potential than the L$_k^d$ frame, and set this count to zero.
  
  \item \label{numerical-procedure:frame-comparison-step} Generate a random frame, and use a minimization algorithm to decrease the $p_{\text{min}}$ frame potential. Compare the minimized frame potential to that of the L$_k^d$ frame; if it is strictly lower, increase the counter. Each time this step was repeated, a new frame was randomly generated.
  
  \item \label{numerical-procedure:repeat-step} Repeat Step~\ref{numerical-procedure:frame-comparison-step} four more times.
  
  \item Perform Steps~\ref{numerical-procedure:frame-comparison-step} and~\ref{numerical-procedure:repeat-step} with $p_{\text{max}}$ instead of $p_{\text{min}}$.
  
  \item \label{numerical-procedure:random-step} Perform Step~\ref{numerical-procedure:frame-comparison-step} with a random value $p^* \in [p_{\text{min}}, p_{\text{max}}]$ instead of $p_{\text{min}}$.
  
  \item Repeat Step~\ref{numerical-procedure:random-step} thirty-nine more times.
  
  \item Display the number of frames with $p$-frame potentials lower than that of the L$_k^d$ frame.
\end{enumerate}
In summary, the procedure makes fifty comparisons to the $p$-frame potential of the L$_k^d$ frame, ten of which are on the endpoints of the interval $[p_{\text{min}}, p_{\text{max}}]$. The result of this program for each pair $k$, $d$ tested was that only one frame was found with a strictly lower frame potential, but the difference was within the realm of numerical error (\verb"<1e-15").

\section{Appendix: Proof of Theorem \ref{thm:CK}}
We now give a proof of Theorem~\ref{thm:CK}. Let $f$ be a smooth function. 
Given a polynomial $g$ with $\deg(g) \geq1$, let $H(f,g)$ denote the Hermite interpolating polynomial of degree less than $\deg(g)$ that agrees with $f$ at each root of $g$ to the order of that root. The following fact is proven in the proof of \cite[Proposition 2.2]{CohKum2007}.

\begin{lemma}\label{lem:H}
Let $a$ be differentiable up to $K$ on a subset of $[-1, 1)$, and $g_1, g_2$ be two polynomials such that $\deg(g_1)+\deg(g_2)\le K$, then
$H(a,g_1g_2)=H(a,g_1)+g_1H(Q(a,g_1),g_2)$ where 
$$Q(a,g):=\frac{a-H(a,g)}{g}.$$

\end{lemma}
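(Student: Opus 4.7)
The plan is to identify the right-hand side as the unique Hermite interpolating polynomial $H(a, g_1 g_2)$ by checking its degree and its interpolation data. Set $P := H(a,g_1) + g_1 H(Q(a,g_1), g_2)$. Since $\deg H(a,g_1) < \deg g_1$ and $\deg H(Q(a,g_1), g_2) < \deg g_2$, we have $\deg P < \deg g_1 + \deg g_2 = \deg(g_1 g_2)$, matching the degree bound satisfied by $H(a, g_1 g_2)$. By uniqueness of Hermite interpolation, it then suffices to verify that $a - P$ vanishes at every root of $g_1 g_2$ to the appropriate order.

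The key computational step is the identity
\begin{equation*}
a - P \;=\; \bigl(a - H(a,g_1)\bigr) - g_1 \, H(Q(a,g_1), g_2) \;=\; g_1 \bigl[Q(a, g_1) - H(Q(a,g_1), g_2)\bigr],
\end{equation*}
using $g_1 \, Q(a,g_1) = a - H(a,g_1)$ by the definition of $Q$. Before exploiting this, I would briefly justify that $Q(a,g_1)$ extends across the roots of $g_1$ to a function that is at least $(K - \deg g_1)$-times differentiable: the numerator $a - H(a, g_1)$ vanishes at each root of $g_1$ to the order of that root, so dividing by $g_1$ yields an extension whose derivatives can be controlled by Taylor's theorem. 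The hypothesis $\deg g_1 + \deg g_2 \le K$ then guarantees that $Q(a,g_1)$ has enough derivatives at the roots of $g_2$ for the Hermite interpolant $H(Q(a,g_1), g_2)$ to be well-defined.

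With the identity in hand, let $r$ be any root of $g_1 g_2$, and denote by $m_1$ and $m_2$ the multiplicities of $r$ as a root of $g_1$ and of $g_2$ respectively (either may be zero). Then $g_1$ vanishes at $r$ to order $m_1$, while by the Hermite interpolation property, $Q(a,g_1) - H(Q(a,g_1), g_2)$ vanishes at $r$ to order at least $m_2$. Multiplying, $a - P$ vanishes at $r$ to order at least $m_1 + m_2$, which is exactly the multiplicity of $r$ as a root of $g_1 g_2$. Uniqueness of Hermite interpolation then forces $P = H(a, g_1 g_2)$. I expect the only mild obstacle to be the smoothness bookkeeping for $Q(a, g_1)$ at the roots of $g_1$, which is handled cleanly by the degree/differentiability hypothesis $\deg g_1 + \deg g_2 \le K$.
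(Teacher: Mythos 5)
Your argument is correct: checking that $H(a,g_1)+g_1H(Q(a,g_1),g_2)$ has degree less than $\deg(g_1g_2)$ and agrees with $a$ to the required order at every root of $g_1g_2$ (splitting the multiplicity of a common root as $m_1+m_2$), then invoking uniqueness of Hermite interpolation, is exactly the standard proof; the paper itself offers no proof but defers to the proof of Proposition~2.2 in Cohn--Kumar, where this same argument appears. The only point deserving a word more of care is the step ``$g_1$ vanishes to order $m_1$ and the bracket to order $m_2$, hence the product to order $m_1+m_2$'': since $Q(a,g_1)$ is only finitely differentiable at a common root, this is best justified by Taylor/big-$O$ estimates ($a-P$ is $C^K$ and is $O(|t-r|^{m_1+m_2})$ with $m_1+m_2\le K$, forcing the low-order Taylor coefficients to vanish) rather than by Leibniz, and the hypothesis $\deg g_1+\deg g_2\le K$ does supply enough derivatives for this, as you anticipated.
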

We provide a variation of \cite[Proposition 2.2]{CohKum2007} below. The proof is also similar.

\begin{proposition}\label{pro:2.2}
Let $c, d \in \R$. If $a$ is (strictly) $K$-absolutely monotone on $(c,d)$, then given any nonconstant polynomial $g$,
$Q(a,g)=\frac{a-H(a,g)}{g}$
is (strictly) absolutely monotone up to $K-\deg g$ on $(c,d)$.
\end{proposition}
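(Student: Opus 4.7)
I will induct on $\deg g$, following the structure of the argument in \cite[Proposition~2.2]{CohKum2007} but adapted to the absolutely monotonic (rather than completely monotonic) setting; throughout I may assume implicitly that every root of $g$ lies in $[c,d]$ and that $g$ has positive leading coefficient, since otherwise $H(a,g)$ would involve undefined values of $a$ or the sign conventions would need to be flipped.

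\textbf{Base case} ($\deg g = 1$): Write $g(x) = \alpha (x - r)$ with $\alpha > 0$ and $r \in [c,d]$. Then $H(a,g)$ is the constant $a(r)$, so for $x \in (c,d)$ I can use the standard Hadamard/Taylor identity
\begin{equation*}
Q(a,g)(x) \;=\; \frac{a(x) - a(r)}{\alpha(x-r)} \;=\; \frac{1}{\alpha}\int_0^1 a'\bigl(r + s(x-r)\bigr)\, ds.
\end{equation*}
Differentiating $k$ times under the integral sign gives
\begin{equation*}
\frac{d^k}{dx^k} Q(a,g)(x) \;=\; \frac{1}{\alpha}\int_0^1 s^{k}\, a^{(k+1)}\bigl(r + s(x-r)\bigr)\, ds.
\end{equation*}
Because $(c,d)$ is convex and contains both $r$ and $x$, the argument $r + s(x-r)$ lies in $(c,d)$ for all $s \in (0,1]$, so $a^{(k+1)} \ge 0$ there whenever $k+1 \le K$. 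Hence the $k$-th derivative of $Q(a,g)$ is $\ge 0$ for all $k \le K-1 = K-\deg g$. In the strict case, $a^{(k+1)} > 0$ on $(c,d)$ forces the integrand to be positive on a set of full measure in $(0,1)$, giving strict positivity.

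\textbf{Inductive step} ($\deg g \ge 2$): Factor $g = g_1 g_2$ into nonconstant polynomials with $\deg g_1, \deg g_2 \ge 1$. By Lemma~\ref{lem:H},
\begin{equation*}
H(a,g) \;=\; H(a,g_1) + g_1\, H\bigl(Q(a,g_1),\, g_2\bigr),
\end{equation*}
so a direct algebraic rearrangement yields the key identity
\begin{equation*}
Q(a,g) \;=\; \frac{a - H(a,g_1) - g_1 H(Q(a,g_1), g_2)}{g_1 g_2} \;=\; \frac{Q(a,g_1) - H(Q(a,g_1), g_2)}{g_2} \;=\; Q\bigl(Q(a,g_1),\, g_2\bigr).
\end{equation*}
By the inductive hypothesis applied with the polynomial $g_1$, the function $Q(a,g_1)$ is (strictly) absolutely monotone up to order $K - \deg g_1$ on $(c,d)$. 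Applying the inductive hypothesis a second time with the polynomial $g_2$ and the function $Q(a,g_1)$, I obtain that $Q(Q(a,g_1), g_2)$ is (strictly) absolutely monotone up to order $(K - \deg g_1) - \deg g_2 = K - \deg g$, as desired.

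\textbf{Main obstacle.} The substantive work is all in the base case: verifying that the Hadamard-type integral representation delivers non-negative (respectively strictly positive) derivatives requires that $(c,d)$ be convex and that the root $r$ of $g$ sit inside the domain of absolute monotonicity, so that the chord from $r$ to $x$ remains in $(c,d)$. Once this is set up, the inductive step is essentially formal, given Lemma~\ref{lem:H}; the principal bookkeeping is tracking the drop in the order of absolute monotonicity, which subtracts exactly $\deg g_i$ at each stage and therefore telescopes to $K - \deg g$ overall.
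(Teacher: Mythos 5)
Your proof is correct, but it is organized differently from the paper's. The paper gives a short, non-inductive argument: it quotes the generalized mean value theorem for divided differences from \cite[Lemma~2.1]{CohKum2007} to get $Q(a,g)(t)=a^{(\deg g)}(\xi)/(\deg g)!$, then combines the Taylor-polynomial identity $Q(h,(t-s_0)^n)(s_0)=h^{(n)}(s_0)/n!$ with the composition rule $Q(a,g_1g_2)=Q(Q(a,g_1),g_2)$ to write $Q(a,g)^{(n)}(s_0)/n!$ as $a^{(n+\deg g)}(\xi')/(n+\deg g)!$ in one stroke, from which the sign is immediate. You instead induct on $\deg g$, replacing the mean value lemma by an explicit Hadamard integral representation in the degree-one base case and using the same composition identity $Q(a,g_1g_2)=Q(Q(a,g_1),g_2)$ (which the paper also records as a consequence of Lemma~\ref{lem:H}) to telescope the loss of $\deg g_i$ orders at each stage. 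What your route buys is self-containedness — you never need to import the divided-difference mean value theorem, and the integral formula makes the nonnegativity (and strict positivity) of each derivative transparent; what it costs is the clean closed-form $a^{(n+\deg g)}(\xi')/(n+\deg g)!$ and a bit of extra bookkeeping. Two small points you gloss over, though both are also left implicit in the paper: you should note that $Q(a,g_1)$ extends to a $C^\infty$ function across the roots of $g_1$ (so that the inductive hypothesis genuinely applies to it), and your normalization by the leading coefficient $\alpha>0$ matches the monic convention under which \cite[Lemma~2.1]{CohKum2007} and the application in Theorem~\ref{thm:CK} are stated.
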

\begin{proof}
By \cite[Lemma 2.1]{CohKum2007}, 
\begin{equation}\label{equ:2.1}
Q(a,g)(t)=\frac{a(t)-H(a,g)(t)}{g(t)}=\frac{a^{(\deg g)}(\xi)}{\deg g!}
\end{equation} for some $\xi\in(c,d)$. 

A direct consequence of Lemma \ref{lem:H} is that $Q(a, g_1g_2) = Q(Q(a, g_1), g_2)$.
For $n\in[1:K-\deg(g)], s_0\in(c,d)$, there exists $\xi'\in(c,d)$ such that
\begin{equation}\label{equ:Q}\frac{Q(a,g)^{(n)}(s_0)}{n!}=Q\left(Q(a,g),(t-s_0)^n\right)(s_0)=Q(a, (t-s_0)^ng)(s_0)=\frac{a^{(n+\deg g)}(\xi')}{(n+\deg g)!}.\end{equation}
The right hand side of \eqref{equ:Q} is nonnegative due to the absolute monotonicity of $a$.
\end{proof}

We also need to define a different version of conductivity here.
\begin{definition}
A nonconstant polynomial $g$ with all its roots in $[-1,1)$ is \emph{$K$-conductive} if for any $K$-absolutely monotone function $a$ on $[-1,1)$, $H(a,g)$ is positive definite.
\end{definition}

The following Lemma is a variation of \cite[Lemma 5.3]{CohKum2007}.

\begin{lemma}\label{lem:conductive}
If $g_1$ and $g_2$ are $K$-conductive and $g_1$ is positive definite, then $g_1g_2$ is ($K+\deg g_1$)-conductive.
\end{lemma}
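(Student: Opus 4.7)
The plan is to combine the decomposition of Hermite interpolation in Lemma \ref{lem:H} with Proposition \ref{pro:2.2} and the Schur product theorem. Specifically, for any $(K+\deg g_1)$-absolutely monotone function $a$ on $[-1,1)$, the factorization
\[
H(a, g_1 g_2) = H(a, g_1) + g_1 \cdot H(Q(a, g_1), g_2)
\]
from Lemma \ref{lem:H} reduces the question to showing each summand defines a positive definite kernel on the sphere.

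For the first summand, since $a$ is $(K+\deg g_1)$-absolutely monotone, it is in particular $K$-absolutely monotone, so the $K$-conductivity of $g_1$ applied directly gives that $H(a, g_1)$ is positive definite. For the second summand, Proposition \ref{pro:2.2} says that $Q(a, g_1)$ is absolutely monotone up to order $(K + \deg g_1) - \deg g_1 = K$ on $[-1,1)$; the $K$-conductivity of $g_2$ then forces $H(Q(a, g_1), g_2)$ to be positive definite. Finally, since $g_1$ itself is positive definite by hypothesis, the pointwise product $g_1 \cdot H(Q(a, g_1), g_2)$ is positive definite by the Schur product theorem. Adding the two positive definite summands yields the desired conclusion.

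Putting all the implications together, $H(a, g_1 g_2)$ is positive definite for every $(K + \deg g_1)$-absolutely monotone $a$ on $[-1,1)$, which is exactly the statement that $g_1 g_2$ is $(K + \deg g_1)$-conductive. I do not anticipate any real obstacle: each of the three ingredients (the Hermite decomposition, the absolute monotonicity of the remainder $Q(a, g_1)$, and Schur's theorem on products of positive definite kernels) does precisely one piece of the bookkeeping, and the only care required is to track the monotonicity order, namely that we lose exactly $\deg g_1$ orders when passing from $a$ to $Q(a, g_1)$, which is what motivates the conductivity index $K + \deg g_1$ in the conclusion.
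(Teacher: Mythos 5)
Your proposal is correct and follows essentially the same route as the paper: decompose $H(a,g_1g_2)=H(a,g_1)+g_1H(Q(a,g_1),g_2)$ via Lemma \ref{lem:H}, use Proposition \ref{pro:2.2} to see that $Q(a,g_1)$ is $K$-absolutely monotone, apply the conductivity hypotheses to each Hermite interpolant, and close with the fact that positive definite functions are closed under sums and products. The only difference is cosmetic: you spell out explicitly why $H(a,g_1)$ is positive definite (that a $(K+\deg g_1)$-absolutely monotone function is in particular $K$-absolutely monotone) and name the Schur product theorem, both of which the paper leaves implicit.
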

\begin{proof}
Let $a$ be ($K+\deg g_1$)-absolutely monotone, then $Q(a,g_1)$ is $K$-absolutely monotone according to Proposition \ref{pro:2.2}. Consequently, $H(Q(a,g_1), g_2)$ is positive definite due to the conductivity of $g_2$. Finally,
$H(a,g_1g_2)=H(a,g_1)+g_1H(Q(a,g_1),g_2)$ is positive definite because all three functions are positive definite and positive definite functions are closed under taking products.
\end{proof}

\begin{proof}[Proof of Theorem \ref{thm:CK}]
Let $-1\leq t_1<t_2<\cdots<t_m<1$ be the $m$ distinct inner products of the $m$-sharp configuration.

Let $a(t)=f(2-2t)$ be defined on $[-1,1)$ and $h(t)$ be the Hermite interpolating polynomial that agrees with $a(t)$ to order 2 at each $t_i$ (i.e. $h(t_i)=a(t_i)$ and $h'(t_i)=a'(t_i)$). Then using our notation, $h=H(a, F^2)$ where $F=\prod_{i=1}^m(t-t_i)$.

For $r\in[-1,1)$, $l(t)=t-r$ is $K$-conductive for any $K\geq0$ since $H(a,l)$ is the nonconstant polynomial $a(r)$. It is also proven in \cite[Section 5]{CohKum2007} that $\prod_{i=1}^j(t-t_i)$ is strictly positive definite for all $j\leq m$.

For any $K\geq0$, $g_1=t-t_1, g_2=t-t_2$ are both $K$-conductive and $g_1$ is positive definite, then Lemma \ref{lem:conductive} implies that $g_1g_2$ is ($K+1$)-conductive. Using Lemma \ref{lem:conductive} repeatedly on $g_1=t-t_j, g_2=\prod_{i=1}^{j-1}(t-t_i)$, we get that $F^2$ is $K$-conductive for any $K\geq 2m$. In particular $F^2$ is $2m$-conductive and it follows that $h=H(a,F^2)$ is positive definite.

It is also clear that $h(t)\leq a(t)$ by applying \eqref{equ:2.1} with $g=F^2$. By \cite[Proposition 4.1]{CohKum2007}, the energy has a lower bound that is achieved by the $m$-sharp configuration.

If further $f$ is strictly $2m$-completely monotone, the uniqueness is the same as in \cite[Section 6]{CohKum2007} where only $a^{(\deg h+1)}(t)>0$ is needed. This is true since $\deg h+1\leq 2m$.

\end{proof}

\section*{Acknowledgements} This work was partially supported by the NSF REU grant DMS1359307 at the University of Maryland, College Park. The authors are thankful to Alan Bangura, Michael Dworken, Meghana Raja, Rosemary Smith, Hetian Wu, and Ran Zhang who have partially worked on the project as participants to the MAPS-REU at the University of Maryland. They would also like to thank, 
Henry Cohn, Dmitriy Bilyk,   Alexey Glazyrin, Matt Guay, Paul Koprowski, Dustin Mixon, Shayne Waldron, and Wei-Hsuan Yu for fruitful discussions. 
K. A. O.\ was partially supported by a grant from the Simons Foundation $\# 319197$,  the U. S.\ Army Research Office  grant  W911NF1610008,  the National Science Foundation grant DMS 1814253, and an MLK  visiting professorship. 

\bibliographystyle{amsplain}
\bibliography{pfpotential}

\end{document}